\documentclass[a4paper,twocolumn,11pt,unpublished]{quantumarticle}
\pdfoutput=1
\usepackage[utf8]{inputenc}
\usepackage[english]{babel}
\usepackage[T1]{fontenc}
\usepackage{amsmath}
\usepackage{hyperref}
\usepackage{amssymb}
\usepackage{amsthm}
\usepackage{mathtools}
\usepackage{bm} 
\usepackage{stmaryrd}
\usepackage{float}
\usepackage{algorithm}
\usepackage{algpseudocode}
\usepackage{xurl}
\usepackage{hyperref} 

\usepackage{tikz}
\usepackage{lipsum}

\newtheorem{theorem}{Theorem}[section]
\newtheorem{lemma}[theorem]{Lemma}

\theoremstyle{definition}
\newtheorem{definition}[theorem]{Definition}

\newcommand{\defeq}{\overset{\text{def}}{=}}

\newcommand{\ket}[1]{\lvert #1 \rangle}

\begin{document}

\title{General circuit mapping algorithm for neutral atom quantum computers}

\author{Neven Gentil}
\affiliation{NanoComputing Research Lab, Department of Electrical Engineering, Eindhoven University of Technology, The Netherlands}
\email{n.s.gentil@tue.nl}
\author{Rianne S. Lous}
\affiliation{Coherence and Quantum Technology Group, Department of Applied Physics and science Education and Center for Quantum Materials and Technology Eindhoven (QT/e), Eindhoven University of Technology, The Netherlands}
\author{Aida Todri-Sanial}
\affiliation{NanoComputing Research Lab, Department of Electrical Engineering, Eindhoven University of Technology, The Netherlands}
\maketitle

\begin{abstract}
    Neutral atom quantum computers (NAQC) are emerging as a promising, scalable quantum computing platform because of their long qubit coherence, flexible qubit arrangement, and multi-qubit gate capabilities. However, circuit execution often requires physically moving qubits, making compilation a critical optimization challenge. We propose a circuit-independent mathematical framework built on graph-theoretic combinatorial optimization that determines the minimal number of required qubit transfers. This model captures spatial constraints specific to NAQC platforms with zone-limited gate operations and multi-qubit gates. From this framework, we encode the qubit mapping problem as a nonlinear integer program and solve it using a genetic algorithm, enabling trade-offs between minimizing the total traveled distance and the number of parallel transfer operations. Compared to the state-of-the-art scalable compiler for zoned architectures, our approach consistently finds fewer transfers. Depending on the optimization focus, our method produces shorter traveled distances or fewer parallel transfer operations. This work provides both theoretical guaranties and a practical tool for efficient, architecture-aware quantum circuit compilation. As a result, practitioners can generate hardware-aware mappings that reduce movement-induced errors and better exploit atom transfer parallelism, directly improving execution efficiency on NAQC devices.
\end{abstract}

\section{Introduction}
\label{section-intro}

\begin{figure*}
    \centering
    \includegraphics[width=0.92\textwidth]{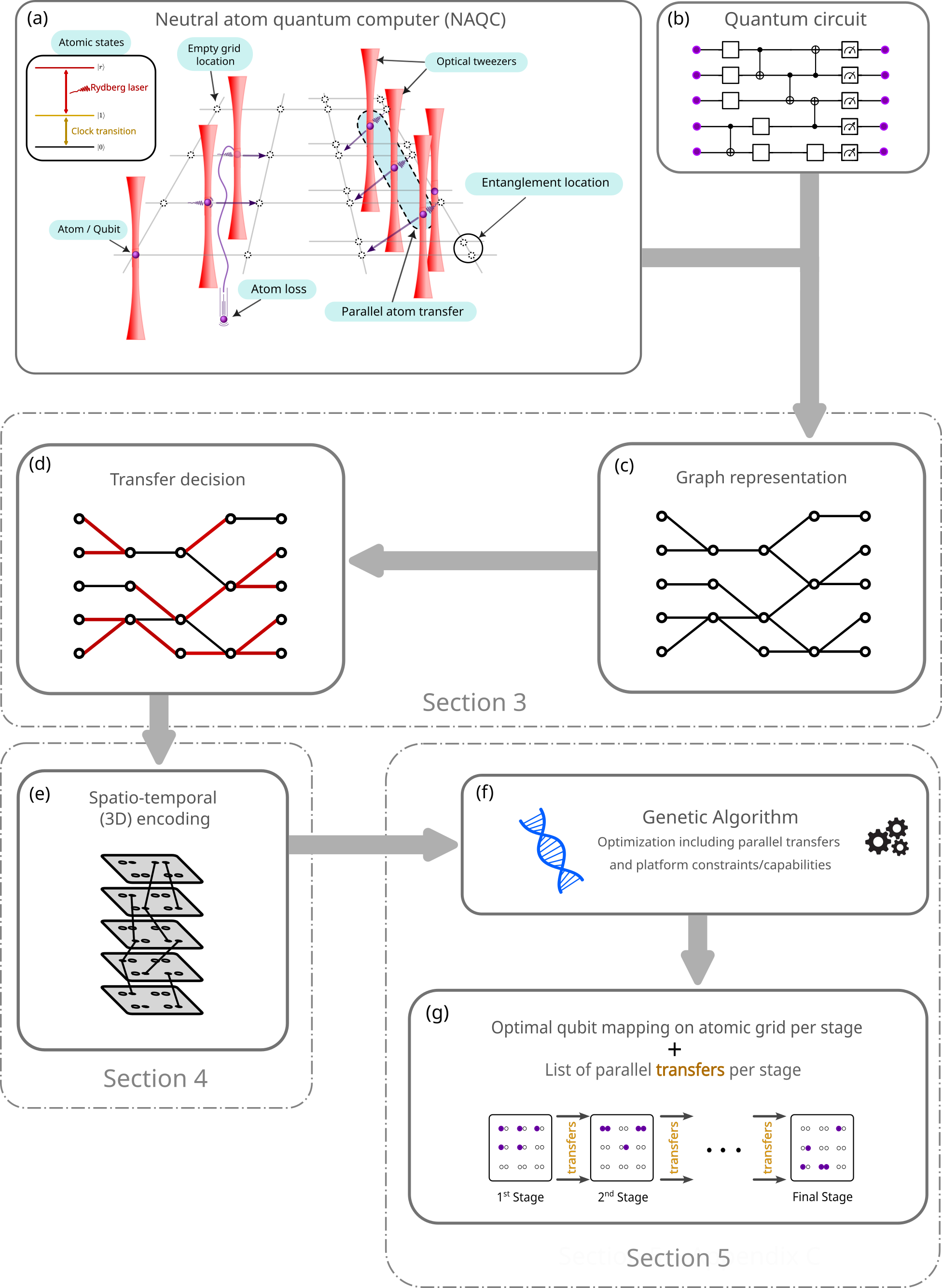}
    \caption{Schematic overview of the proposed mapper algorithm. Given a specific neutral atom platform (a) and a quantum circuit (b), the quantum digraph represents the decomposition of gate operations into different stages and atom transfers between the stages (c). This graph encompasses the specific platform constraints and capabilities. The graph representation allows to decouple qubit placement at each stage from the transfer scheme used between stages. An optimal transfer scheme is deduced (d) and used to build an efficient encoding of the mapping problem (e). Then, a genetic algorithm (f) converges to a global optimum with respect to the distance traveled by the atoms and the parallel transfers. The result is a mapping of the atoms at each stage and a sequence of parallel transfers applied between the stages (g).}
    \label{fig:graph-abstract}
\end{figure*}

NAQC platforms are emerging as a highly scalable platform at the forefront of quantum computation and quantum simulation \cite{Schmid_2024}\cite{Henriet2020}, due to their excellent connectivity \cite{Bluvstein2024} and scalability \cite{Manetsch2025}\cite{Chiu2025} . They leverage precisely controlled arrays of ultracold atoms manipulated by laser fields. In NAQC, qubits are encoded in atomic states, and the Rydberg blockade mechanism produces an equivalent CZ gate \cite{PhysRevLett.85.2208} as the basic entanglement gate. Furthermore, one advantage of Rydberg physics is the in-built availability of multi-qubit gates. The CCZ gate has already been implemented \cite{Pelegri2022}\cite{Tang2022} and, more generally, $C^k$-phase gates are very promising for NAQC due to the natural scalability of Rydberg-mediated multi-qubit interactions\cite{Pelegri2022}. In order to allow \textit{dynamic} connectivity across the atomic grid and apply entanglement gates for arbitrary pairs of qubits, the atoms are physically moved through optical tweezers. This sequence of transfers is called \textit{remapping}. Therefore, for a given quantum circuit, one needs to define where the qubits are placed and at which timestep of the circuit. This is commonly referred to as the \textit{mapping} process, and specific algorithms or software dedicated to that task are called \textit{mappers}.

Within the NISQ era, reducing the number of atom transfers is an important challenge for the NAQC platforms, which improves the connectivity, yet requires dedicated software and algorithms. Early approaches \cite{Tan_2024} \cite{Ludmir2024Parallax} \cite{Wang2025Atomique} focused on optimizing and mapping the location of atoms throughout the quantum circuit using satisfactory modulo theory or rudimentary greedy algorithms. They exploited the atom movements to overcome scalability issues with swap gates. They focused on monolithic architectures (M-arch) where gates are executed in the same region of space. However, for realistic quantum circuits up to a hundred of qubits, the total fidelity induced by the mapping process was often too low for actual testing in experiments. In 2024, Enola mapper \cite{10.1145/3658617.3697778} introduced simulated annealing for qubit placement and graph optimization techniques for qubit routing during transfers. The total fidelity improved by more than five times and quantum circuits of up to 10,000 qubits were practically mapped within 30 minutes. 

Later, PowerMove \cite{ruan2025powermove}, ZAP \cite{huang2024zapzonedarchitectureparallelizable} and Mantra \cite{jang2025qubitmovement} mappers introduced the utilization of zoned architectures (Z-arch). In this architecture \cite{Bluvstein2024}, the entanglement zone is physically separated from the storage zone preventing idling qubits from being excited to a Rydberg state. Then ZAC \cite{lin2025reuseawarecompilationzonedquantum}, a reuse-aware Z-arch mapper, incorporated graph optimization techniques for qubit placement prior to a simulation annealing (SA) process.  In particular, ZAC relies on bipartite graphs to maximize the reusability of the qubits between the stages and the different zones of the quantum computer. Coupled to the SA algorithm, this stochastic global optimization process samples the solution space via a temperature-controlled acceptance probability, allowing to converge toward global optimum for qubit mapping. Finally, a mapper from the \textit{Munich Quantum Toolkit} (MQT) \cite{Stade2025ICCAD} introduced a routing-aware placement method, jointly optimizing qubit mapping across stages and qubit parallel movements. This mapper succeeds his predecessor NALAC \cite{Stade2024QCE} and compares its benchmark with ZAC, claiming fewer rearrangement steps per circuit.  Although these approaches improve placement and reduce movement costs, they did not provide explicit lower bounds or a general formulation for optimal atom transfer.

In this work, we go beyond existing heuristic NAQC mappers by introducing a unified graph-theoretic framework that evaluates the minimum number of physical transfers required to run a quantum circuit on both M-arch and Z-arch with any quantum gate set \cite{Gentil2026Patent}, including multi-qubit gates as the CCZ gate. After providing an overview of the proposed mapper algorithm tailored to the capabilities of the NAQC platform in Section \ref{section-naqc}, we introduce graph theory to map any quantum circuit to a graph representation in Section \ref{section_transfers}. Building on our theoretical results, we introduce in Section \ref{qubit-mapping-encoding} an efficient encoding of the mapping problem and a genetic algorithm that jointly optimizes the total distance traveled by the atoms and the use of parallel transfers. In Section \ref{section-transfer-operation} we establish a systematic approach using graph representation to efficiently construct PTOs as a post-process of a given mapping. This post-process construction coupled with the efficient encoding reveals a controllable trade-off between parallelism and spatial efficiency. In Section \ref{section-results} we compare our approach with ZAC and MQT showing that the algorithm performs on par with or exceeds the state-of-the art. Our work provides both a mathematical baseline for NAQC compilation and a flexible optimization tool adaptable to various architectures.

\section{Neutral Atom Quantum Computer (NAQC) And Algorithm Overview}
\label{section-naqc}

A schematic overview of the full algorithm is sketched in Figure \ref{fig:graph-abstract}, highlighting the NAQC capabilities in Figure \ref{fig:graph-abstract}-(a). Typically, a NAQC platform is a 2D grid of individually trapped atoms that can be programmed by laser pulses and/or microwaves \cite{Wintersperger2023} \cite{Henriet2020} \cite{Morgado2021}. The atoms are prepared in specific quantum states within an ultra-high vacuum apparatus and kept in place  with the aid of optical tweezers or lattices. Commonly used atom species are Cesium, Rubidium, or Strontium, and can come in single species setups or in heterogeneous mixtures \cite{Saffman2010}\cite{Madjarov2020NatPhys}\cite{Anand2024DualSpecies}\cite{venderbosch2026robuststrontiumtweezerapparatus}. The platform also allows 3D grid configurations \cite{Barredo2018}. 

Neutral atom qubits use two atomic levels as $\ket{0}$ and $\ket{1}$, driven by resonant optical or microwave fields inducing Rabi oscillations. Single-qubit gates are implemented via pulses with controlled duration, phase, and frequency, and readout is performed through state-selective fluorescence detection. Coherence times are on the order of several seconds. Typically, the coherence time of the state $\ket{1}$ for rubidium platforms is about $T_2\simeq1.5\text{s}$ \cite{Evered2023}. The entanglement gates in NAQC use a Rydberg state $\ket{r}$ and a second Rydberg laser, which is typically applied globally over hundreds of micrometers. However, the Rydberg blockade mechanism requires that the two atoms involved in a CZ gate \cite{PhysRevLett.123.170503} \cite{PhysRevLett.85.2208} be close enough to maximize the atom-atom interaction. The implementation of the gate therefore relies on physically bringing the selected qubits into proximity via atom transfer. The distance required to entangle two Rydberg atoms is on the order of 2\,µm to 4\,µm for current platforms\cite{Wintersperger2023}. The Rydberg interaction, which is an van der Waals type interaction, scales as $R^6$ over the distance $R$ between atoms \cite{PhysRevLett.110.263201} and is negligible for a distance greater than 10 \,µm. Otherwise, the qubits separated by a distance shorter than this range are susceptible to undergo crosstalks that lead to infidelity of the CZ gate. 

When executing an quantum circuit \ref{fig:graph-abstract}-(b), the entanglement gates require movement of qubits and thereby limit the fidelity of quantum circuit execution. Typically, the probability of successfully transferring an atom to a target position is greater than 0.99 for recent platforms\cite{PhysRevA.106.022611}\cite{Barredo2016}\cite{Schymik2022}. Therefore, after hundreds of  moves, the probability of losing an atom or changing their state during the movement is greater than one chance over two. This limit is often rapidly reached for quantum circuits involving as few as 100 qubits.

Transferring atoms in space requires time, which reduces the computation time of the quantum circuit. It is typically done using moveable tweezers controlled through acousto optical deflectors. The speed of the tweezers varies according to the platform but can be as high as $55 \text{µm/µs}$\cite{Bluvstein2022}. With this technology, the rows and columns of atoms can be moved simultaneously, allowing parallel transfer operations (PTOs). However, constraints apply and atoms can not cross already filled sites.

To create an optimal sequence of parallel transfers, we use a graph representation (Figure\ref{fig:graph-abstract}-(c)) that decomposes the quantum circuit into stages with parallel transfer operations in between. This allows us to select which atoms need to be rearranged at specific times, leading to a transfer decision graph (Figure\ref{fig:graph-abstract}-(d)). The degree of freedom left by the transfer decision permits one to construct an efficient spatio-temporal encoding of the quantum circuit mapping problem, Figure \ref{fig:graph-abstract}-(e). Then, a genetic algorithm optimizes both the global distance traveled by the atoms and the amount of PTOs, as shown in Figures \ref{fig:graph-abstract}-(f). Thus, an optimized mapping is obtained with a list of parallel transfers per transition (Figure \ref{fig:graph-abstract}-(g)).

\section{Graph Representation}
\label{section_transfers}

\begin{figure}
    \centering
    \includegraphics[width=0.92\linewidth]{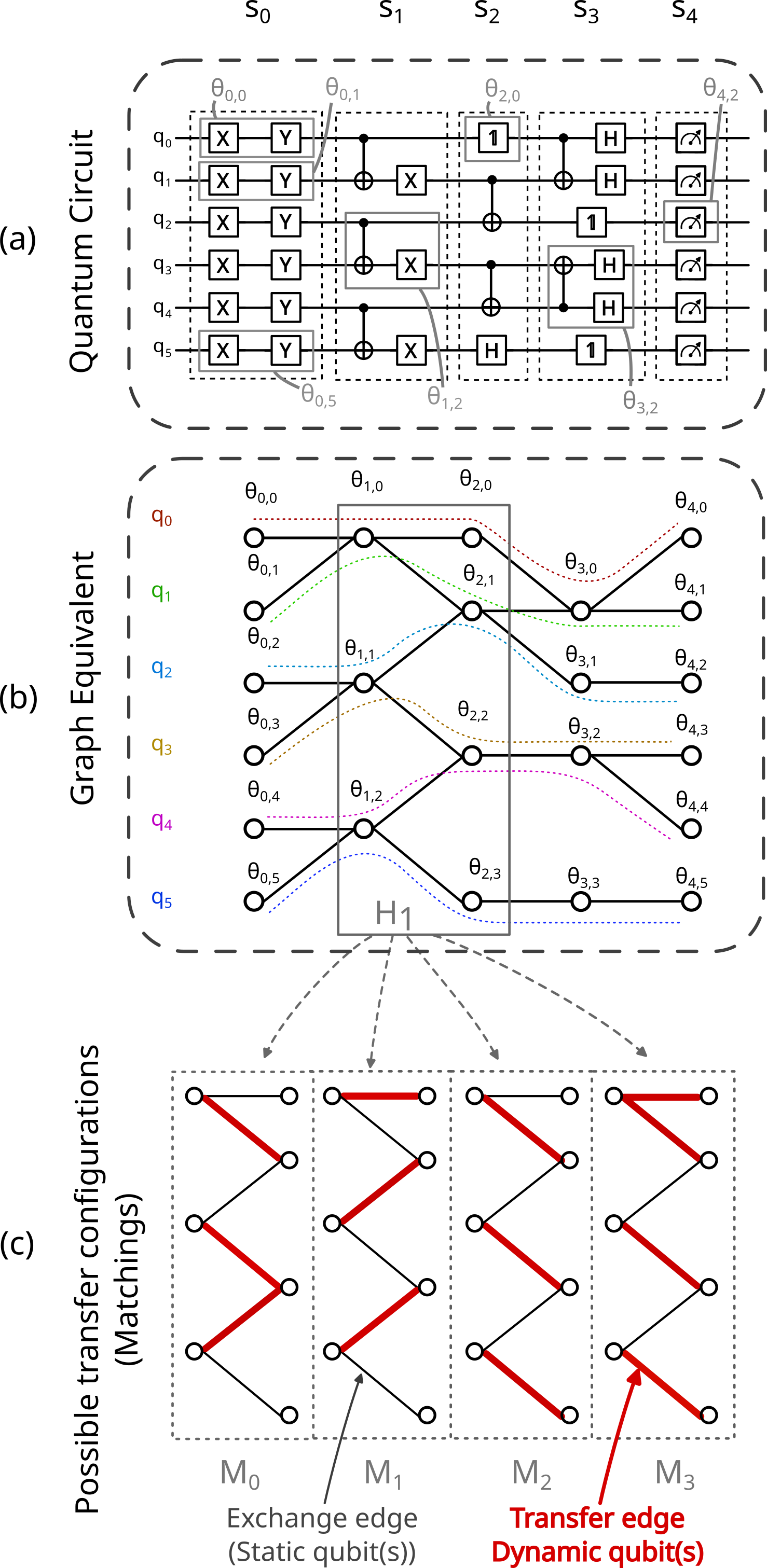}
    \caption{Example of quantum circuit transformed into a quantum circuit digraph with extraction of possible matchings between consecutive stages. (a) is an example of scheduled quantum circuit. The stages are labeled $s_i$. Inside the stages, the series of gates affecting common qubits are grouped into instructions $\theta_{i, j}$. Transfers of the qubits, or \textit{remapping}, are executed between each stage exclusively. (b) is the associated general quantum digraph $G$ from (a). The edges are directed from $\theta_{i,j}$ to $\theta_{i + 1, j'}$. The instructions applied on the qubits along the stages are represented by colored dotted lines: $q_0$ undergoes $\theta_{0, 0}$, $\theta_{1, 0}$, $\theta_{2, 0}$, $\theta_{3, 0}$ and $\theta_{4, 0}$ for example. The subgraph $H_1$ is \textit{extracted} from successive stages $s_1$ and $s_2$ in $G$. In particular $H_1$ is the subgraph containing the vertices $\theta_{1, j}$ and $\theta_{2, j'}$. (c) represents four possible matchings for the subgraph $H_1$. The edges marked as transfers are in red. The edges marked as static are in black. The matching is an edge coloring (black color) of the bipartite graph $H_1$. Note that $M_0$, $M_1$ and $M_2$ are maximum weighted matchings. They can be labeled $\mathcal{M}_{1,0}$, $\mathcal{M}_{1,1}$ and $\mathcal{M}_{1,2}$ in this example to indicate the connection with $H_1$.}
    \label{fig:qcirctom}
\end{figure}

As a first step, the algorithm decomposes a given quantum circuit into a graph representation, as illustrated in Figure \ref{fig:qcirctom}-(a). The quantum circuit consists of a register of a fixed number of qubits initialized in given quantum states, a time-ordered series of quantum gates, and a measurement gate applied to each qubit at the end of the circuit. The number of qubits is $N_q$ and a qubit is denoted $q_l$ where $l \in \llbracket0; N_q-1\rrbracket$. Given a quantum circuit, we assume that the associated time-ordered series of quantum gates has been optimized (i.e. twice CNOT is the identity) and written with the quantum gates available in the library of the targeted platform. Thus, we assume that a quantum circuit has a unique representation.

We divide the quantum circuit into \textit{stages} and \textit{instructions}. A stage is defined as \textit{a series of quantum gates in which the register does not need to be remapped}. The number of stages is $N_{s}$ and a specific stage is designated by $s_i$ where $i \in \llbracket 0; N_{s} - 1 \rrbracket$. We call an \textit{instruction} the series of quantum gates applied to a group of qubits for a given stage. The number of instructions involved inside a stage $s_i$ is $N_{\theta,i}$ and a specific instruction is written as $\theta_{i, j}$ where $j \in \llbracket 0; N_{\theta,i} - 1\rrbracket$. Each instruction $\theta_{i, j}$ has a set of qubits associated to it. For a given stage, every idle qubit is assigned to an artificial identity instruction. This instruction has no physical impact on the target, and his purpose is to analytically ensure the isolation of the qubit from other instructions. This means that for every stage, each qubit is associated with exactly one instruction. Note that we consider the quantum circuits to be optimized so that the highest possible number of quantum gates is executed per stage. We call this a scheduled quantum circuit. Any arbitrary circuit can be mapped onto this using additional scheduling software.

We define $G(V, E)$ as the general quantum digraph representing the circuit, as visualized in Figure \ref{fig:qcirctom}-(b). The edges connect different instructions between stages, for example, $\theta_{i,j}$ to $\theta_{i,j'}$. The edges are called exchange edges when the qubits concerned are static and do not physically move between the connected stages. When the qubits within undergo physical movement, we call it a transfer edge. Moving qubits are called dynamic qubits. The set of vertices in $V$ is the set of instructions $\theta_{i, j}$.  The digraph $G$ is similar to a Directed Acyclic Graph (DAG) representation, except that the vertices in $G$ contain full instructions instead of single gates for a standard DAG.

From the construction of G, we identify the subgraphs $H_i$, for which we determine possible transfer configurations called matchings ($M_i$) as shown in Figure \ref{fig:qcirctom}-(c). The subgraphs $H_i$ represent the instructions connected from one stage to another, weighted by the amount of qubits $\omega(e)$ for each edge. A matching $M_i$ is a set of edges of $H_i$ such that no two edges share a common vertex. Those edges are either an exchange or a transfer edge, as depicted with red or black coloring. The challenge in defining $M_i$ is to find a physically valid and optimal set of transfer edges having the least number of underlying dynamic qubits. We prove in Appendix \ref{gt-of-opt-transf} that a transfer configuration between $s_i$ and $s_{i+1}$ corresponds to a matching of $H_i$, noted $M_i$. Note that it is physically impossible to associate two qubits of different instructions in $s_i$ with static edges, while both qubits need to reach the same instruction, and hence the same location in $s_{i+1}$.

The score of any matching $M_i$ is defined as the summation of the amount of static qubit assigned by the edges in $M_i$: 
\begin{equation}
\Omega(M_i)= \sum_{e \in M_i} \omega(e).
\end{equation} 
In the case where the upper limit $\Omega(M_i)$ is reached for a certain $H_i$, this maximum matching $M_i$ is renamed $\mathcal{M}_i$. The maximum weighted matching $\mathcal{M}_i$ is not necessarily unique, and we label them $\mathcal{M}_{i,0}$, $\mathcal{M}_{i,1}$, etc. Physically, it means that given a couple $(s_i, s_{i+1})$, there are multiple valid transfer configurations that minimize the number of transfers.

We arbitrarily pick $\mathcal{M}_i$ in $\mathcal{M}_{i,0}$, $\mathcal{M}_{i,1}$, etc. because only the score of the chosen matching matters. In other words, the global optimum of the transfers is obtained by concatenating the local optimum found for each transition $H_i$. For M-arch, this leads to the following theorem:
\begin{theorem}
\label{theorem-global-opt-m}
    The total number of transfers $\lambda_q$ for a quantum circuit $G$ for M-arch is minimized by the sum of each of the local minimums in $\mathcal{M}_i$:
\begin{equation}
    \lambda_q \defeq \sum_{i=0}^{|H(i)|} \Omega(\mathcal{M}_i)
\end{equation}
\end{theorem}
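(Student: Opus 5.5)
The plan is to reduce the global count to a sum of independent per-transition counts, and then to show that each term is optimized by a maximum weighted matching of $H_i$. The formula as written sums $\Omega(\mathcal{M}_i)$, which counts \emph{static} qubits. I will therefore read $\lambda_q$ as being determined by these scores: the number of transfers across transition $i$ is $N_q-\Omega(M_i)$. Minimizing transfers is then the same as maximizing $\sum_i \Omega(M_i)$, and the minimal count is $\sum_i\bigl(N_q-\Omega(\mathcal{M}_i)\bigr)$. I will prove the statement in that equivalent form.

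\textbf{Step 1: per-transition accounting.} Since every qubit belongs to exactly one instruction in each stage, including the artificial identity instructions, each qubit $q_l$ traces exactly one edge of $H_i$, namely $(\theta_{i,j},\theta_{i+1,j'})$. The weight $\omega(e)$ counts those qubits. A transfer configuration between $s_i$ and $s_{i+1}$ declares each qubit static or dynamic. By the correspondence invoked in the text and proved in Appendix~\ref{gt-of-opt-transf}, the set of edges whose qubits stay static forms a matching $M_i$ of $H_i$: two static groups cannot converge to one location, and one location cannot split into two. Conversely, any matching is realizable by moving all qubits on unmatched edges. I also need that a static edge can keep \emph{all} $\omega(e)$ of its qubits fixed, since they share a location. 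Together these give: for a fixed matching, the number of transfers on transition $i$ is exactly $N_q-\Omega(M_i)$.

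\textbf{Step 2: decoupling across transitions.} This is the main obstacle. I must show that the choice of $M_i$ does not constrain the choice of $M_{i+1}$ on an M-arch. My argument is that in a monolithic architecture every site can host any instruction, and there is no zone restriction. A configuration is therefore just an assignment of instructions to locations, and for stage $s_{i+1}$ the only coupling to its neighbours is which instruction reuses which location. Given any matchings $M_i$ and $M_{i+1}$, I will construct locations stage by stage. Each instruction $\theta_{i+1,j'}$ matched in $M_i$ inherits the location of its partner $\theta_{i,j}$. Each unmatched instruction receives a fresh free site, which exists by assuming a grid larger than the register. Then $M_{i+1}$ is applied in the same way. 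The matching property of $M_{i+1}$ guarantees that each location is inherited at most once, so the construction is consistent. Hence the feasible set is the product $\prod_i \{\text{matchings of } H_i\}$.

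\textbf{Step 3: conclusion.} The total transfer count is
\begin{equation*}
\sum_i \bigl(N_q-\Omega(M_i)\bigr)
\end{equation*}
over a product domain, so it is minimized by independently maximizing each $\Omega(M_i)$, that is, by choosing $M_i=\mathcal{M}_i$. Any choice among $\mathcal{M}_{i,0},\mathcal{M}_{i,1},\dots$ gives the same score, which justifies the arbitrary pick. The delicate point I expect is that Step~2 really uses the M-arch hypothesis. In a Z-arch, the zone of an instruction, entangling versus storage, couples consecutive transitions, and the argument would fail there.
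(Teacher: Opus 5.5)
Your proposal is correct and follows the paper's route. Like the paper, you do per-transition accounting via Lemmas \ref{lemma-matching} and \ref{lemma-local-opt}; the paper's Lemma \ref{lemma-local-opt} also counts transfers as $N_q-\Omega(M_i)$, so reading the displayed sum as $\sum_i\bigl(N_q-\Omega(\mathcal{M}_i)\bigr)$ is the right interpretation. Both then argue independence across transitions, but the paper only observes that each $H_i$ is fixed by the circuit's sets $Q_{i,j}$, whereas your stage-by-stage placement construction proves that any tuple of matchings is jointly realizable on an M-arch — the fact the paper tacitly needs to add up local optima, and the place where the monolithic hypothesis actually enters.
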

A proof of Theorem \ref{theorem-global-opt-m} can be found in Appendix \ref{gt-of-opt-transf}. Numerically, the Hungarian algorithm with inverted weights can be used to find the maximum weighted matchings $\mathcal{M}_{i,j}$ \cite{KorteVygen2018}.

\begin{figure}
    \centering
    \includegraphics[width=0.68\linewidth]{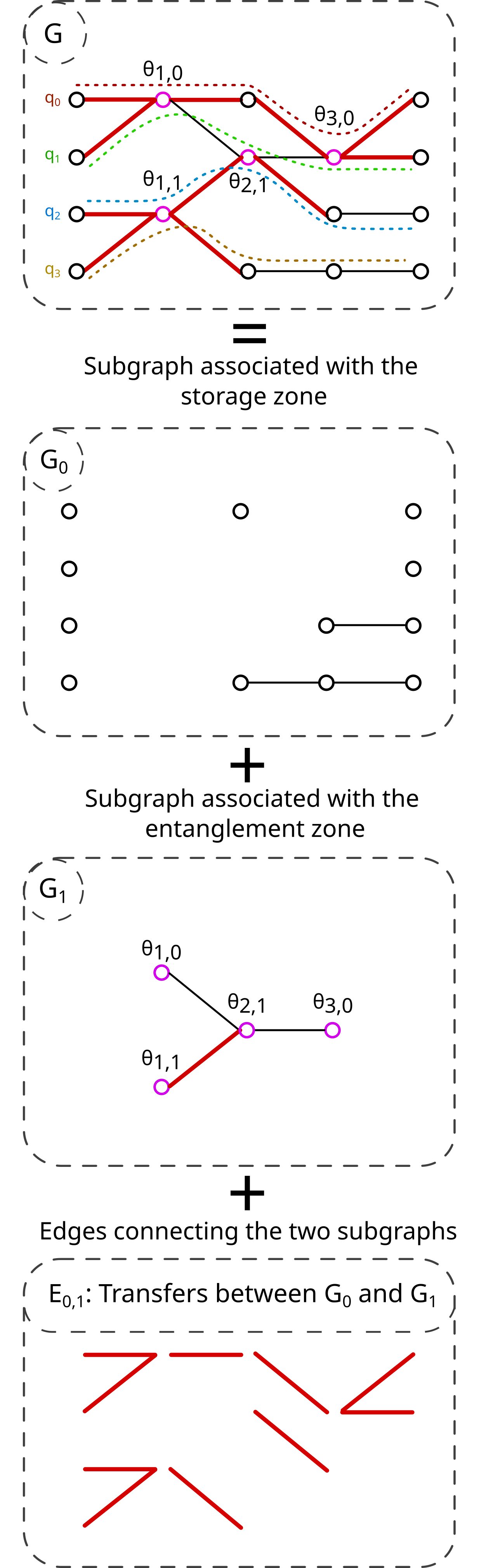}
    \caption{Decomposition of part of the quantum circuit from Figure \ref{fig:qcirctom} for a zoned architecture with two zones. The edges marked as transfers are in red. The edges marked as static are in black. The instructions containing two qubits are pink circles. In this example, $\hat{\theta}_{Z_1} = \{\theta_{1,0}, \theta_{1,1}, \theta_{2,1}, \theta_{3,0}\}$. $G$ is therefore the union of the subgraph $G_0$ containing single-qubit gates only, $G_1$ containing two-qubit gates/single qubit gates, and the edges $E_{0,1}$ connecting $G_0$ and $G_1$. Since the physical zones associated with $G_0$ and $G_1$ are spatially separated, the edges connecting $G_0$ and $G_1$ are necessarily transfers.}
    \label{fig:zoned-arch-ex}
\end{figure}

Theorem \ref{theorem-global-opt-m} can be generalized for Z-arch. We consider $N_Z \geqslant 1$ zones for specific quantum hardware. The zones are labeled $Z_0$, $Z_1$, etc. Given that a quantum digraph G is executable on this hardware, every instruction $\theta_{i,j}$ is strictly associated with exactly one zone. Thus, the instructions are grouped into sets $\hat{\theta}_{Z_0}$, $\hat{\theta}_{Z_1}$, etc. where $\hat{\theta}_{Z_0}$ owns all the instructions strictly executed in zone $Z_0$, for instance. In the following, $m$ and $n$ are variables in $\llbracket 0; N_Z - 1\rrbracket$.
\begin{theorem}
\label{theorem-glob-opt-subgraph}
    Considering a quantum circuit executed on a Z-arch, the minimum number of transfers in the global quantum digraph $G(V,E)$ is obtained by minimizing the transfers in all subgraphs $G_n(V_n, E_n)$ with $V_n \defeq \hat{\theta}_{Z_n}$ and $E_n \defeq \{\{u, v\} \in E | u \in V_n, v \in V_n\}$.
\end{theorem}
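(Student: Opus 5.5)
The plan is to reduce Theorem~\ref{theorem-glob-opt-subgraph} to Theorem~\ref{theorem-global-opt-m}. First we partition the edges of $G$, then we observe that cross-zone edges contribute a fixed cost, and finally we apply the M-arch result inside each zone.

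\textbf{Step 1: partition the edges.} Every instruction belongs to exactly one zone, so the vertex sets $V_n=\hat{\theta}_{Z_n}$ partition $V$. The edge set then splits as
\[
E=\bigsqcup_{n} E_n \;\sqcup\; \bigsqcup_{m<n} E_{m,n},
\]
where $E_{m,n}$ collects the edges joining $V_m$ and $V_n$. Correspondingly, each transition subgraph $H_i$ splits into intra-zone parts $H_i^{(n)}\subseteq G_n$ and cross-zone edges.

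\textbf{Step 2: cross-zone edges carry a fixed cost.} The zones are spatially disjoint, so an edge in $E_{m,n}$ with $m\neq n$ can never be static: the qubits it carries must be transferred. Hence, in any physically valid transfer configuration, the cross-zone edges contribute exactly $\sum_{m<n}\sum_{e\in E_{m,n}}\omega(e)$ transfers. This quantity depends only on the circuit and the zone assignment, not on the chosen matchings.

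Consequently, a valid matching $M_i$ of $H_i$ may use only intra-zone edges as static edges. Such a matching is exactly a disjoint union $M_i=\bigsqcup_n M_i^{(n)}$, where each $M_i^{(n)}$ is a matching of $H_i^{(n)}$. The union is again a matching because the vertex sets are disjoint.

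\textbf{Step 3: the score decouples.} The score is additive over this union:
\[
\Omega(M_i)=\sum_n \Omega\bigl(M_i^{(n)}\bigr).
\]
Since the factors $M_i^{(n)}$ can be chosen independently, maximizing $\Omega(M_i)$ is equivalent to maximizing each $\Omega(M_i^{(n)})$ separately. Summing over $i$ and invoking Theorem~\ref{theorem-global-opt-m} (concatenation of local optima is globally optimal) inside each $G_n$ gives
\[
\text{transfers}(G)=\text{const}+\sum_n \text{transfers}(G_n).
\]
The minimum over $G$ is therefore attained precisely by minimizing every $G_n$.

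\textbf{Main obstacle.} The hard step is justifying that Theorem~\ref{theorem-global-opt-m} applies within a single zone. The subgraph $G_n$ is not itself a ``consecutive-stage'' digraph: a qubit may leave $Z_n$ and later return. As a result, $G_n$'s transitions $H_i^{(n)}$ can have vertices with missing neighbours, and some qubits are absent from $G_n$ during some stages.

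My approach would be to restate the appendix argument for Theorem~\ref{theorem-global-opt-m} in a form that only needs the following facts:
\begin{itemize}
\item transfer configurations between consecutive stages are matchings;
\item the choices at different transitions are independent.
\end{itemize}
Both facts hold for the induced bipartite graphs $H_i^{(n)}$, with absent vertices and edges simply omitted. I would check in particular that a qubit re-entering a zone has already been counted as transferred via its cross-zone edge. This ensures there is no double counting, and that no static-assignment constraint couples different zones.
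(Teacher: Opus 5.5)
Your proposal is correct and follows essentially the same route as the paper. In both, cross-zone edges are forced to be transfers; the paper gets this by a two-case argument that rules out, as physically impossible, any globally optimal coloring with a static edge in some $E_{m,n}$. What remains is a fixed cross-zone cost plus an independent application of Theorem~\ref{theorem-global-opt-m} inside each $G_n$.

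Your version is more careful than the paper's in three places. It makes explicit that valid matchings of $H_i$ split as disjoint unions of zone-wise matchings with additive score. It checks that the M-arch argument still works on the non-consecutive subgraphs $G_n$. And your constant $\sum_{m<n}\sum_{e\in E_{m,n}}\omega(e)$ counts each cross-zone edge once, whereas the paper's $\sum_n\sum_{m\neq n}$, taken with its own remark that $E_{m,n}=E_{n,m}$, counts each such edge twice as written.
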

As an example, in Figure \ref{fig:zoned-arch-ex} we consider a subcircuit of Figure \ref{fig:qcirctom}-(a,b) executed on a $N_z=2$ Z-arch (2Z-arch), i.e. single-qubit instructions are localized in a zone different from the two qubit instructions. Note that the minimum amount of transfers $\lambda_q$ for Z-arch is not strictly the direct sum of the minimum amount of transfers for each $G_n$. A proof of Theorem \ref{theorem-glob-opt-subgraph} and the actual expression for the minimum amount of transfers can be found in Appendix \ref{gt-of-opt-transf}. 

In Figure \ref{fig:zoned-arch-ex}, we see that the maximum weighted matchings are found in $G_0$ and $G_1$ in order to find the global optimum. The set of edges in $E_{0,1}$ are \textit{mandatory} transfers: atoms are moved from/to different zones of the hardware, and the edges involved cannot be selected as \textit{exchange}.

\section{Qubit Mapping Encoding}
\label{qubit-mapping-encoding}

\begin{figure}
    \centering
    \includegraphics[width=0.90\linewidth]{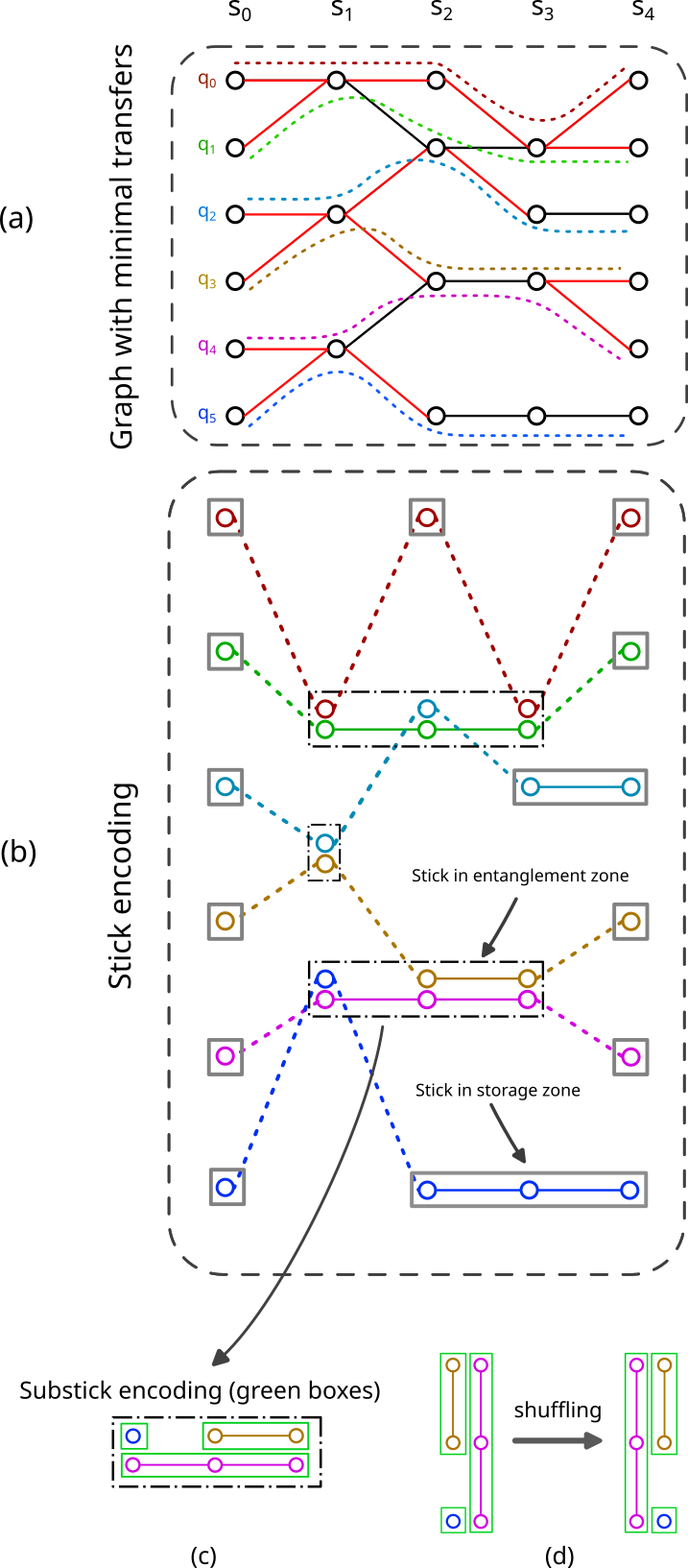}
    \caption{Transforming the graph representation into stick encoding. Sub-figure (a) is the digraph $G$ of the quantum circuit in Figure \ref{fig:qcirctom} where the matchings (black edges) fit the minimum number of transfers for a Z-arch. The red edges are transfers: 16 transfers are required to run the circuit on a two-zoned architecture. Sub-figure (b) is the efficient stick encoding of $G$. The black boxes are sticks associated with $G_0$ while dashed black boxes are sticks associated with $G_1$ (see Figure \ref{fig:zoned-arch-ex}). The dotted lines are transfers. The colors correspond to the qubit colors in (a). Sub-figure (c) is an example of substicks encapsulation (green boxes). Sub-figure (d) is an example of shuffling of the substicks (green boxes) for a single stick which allows for modifying the stick encoding. In this specific case, only two-qubit gates are involved. Hence, as the only shuffling operation available, the two substicks on the left (gold and blue nodes) can have their position on the atomic grid swapped with the substick on the right (pink nodes).}
    \label{fig:g2sticks}
\end{figure}

\begin{figure*}
    \centering
    \includegraphics[width=1.0\textwidth]{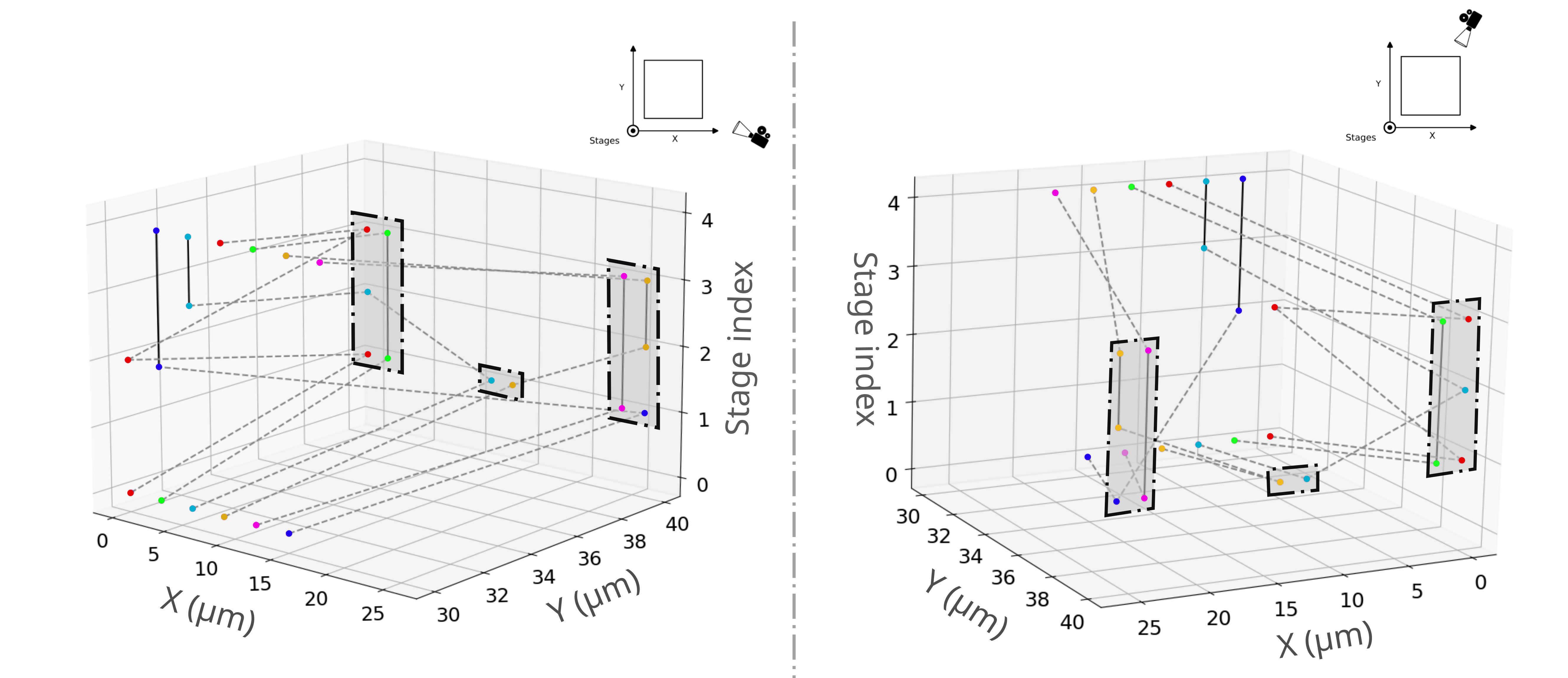}
    \caption{A 3D view of the stick encoding described in Figure \ref{fig:g2sticks} for a 2Z-arch. The atomic grid is the XY plane. The Z axis is time / stages. The atoms are colored according to Figure \ref{fig:qcirctom}. The transfers are dotted grey lines connecting the atoms. The black lines represent the substicks. We clearly see the atoms in the storage zone at Y=30, Stage index=0 and Y=30, Stage index=4. The entanglement zone is located along the Y=40 vertical plane. Dashed black boxes encompass sticks/substicks involved in the entanglement zone.}
    \label{fig:encoding-3d}
\end{figure*}

From the graph representation, we extract an encoding that allows us to optimize the actual mapping of the qubits for each stage. In this section, we present an encoding, that is, a representation of the circuit, that exposes only the intrinsic degrees of freedom (DoFs) after a transfer configuration has been found (Section \ref{section_transfers}). For each \textit{stick} described in the following encoding, the associated DoFs are a 2D spatial coordinate on the atomic grid, a combination of the underlying \textit{substicks}. Such encoding will then allow to efficiently optimize the traveled distance of every atom and the possible PTOs, hence increasing the total fidelity of the circuit.

In Figure \ref{fig:g2sticks}-(a) we show the optimal transfer configuration for the quantum circuit in \ref{fig:qcirctom}-(a) to be executed on a 2Z-arch. Figure \ref{fig:g2sticks}-(b) is the associated efficient encoding of the remaining DoFs. By representing the "grid" on the X axis and the stages of the circuit on the Z axis, sets of successive edges marked as transfer get a distinguishable \textit{stick-like} shape. We call them \textit{sticks} and name this specific representation the \textit{stick encoding}. A stick with a single qubit per stage is associated with a unique position on the atomic grid: all the underlying instructions are located at the same position for the respective stages. To avoid physical collision, each qubit of a given multi-qubit instruction $\theta_{i,j}$ must have its own dedicated position on the grid at stage $s_i$. We consider a distance of 2µm to be sufficient to avoid collisions, which is typically found for current NAQC platforms \cite{Wintersperger2023}. Therefore, for sticks with two (or more) qubits per stage, a \textit{position} is actually a pair (or more) of neighboring atomic grid locations. Physically, changing the position of any stick changes the position of all the involved qubits among all the underlying stages.

To complete the encoding, it suffices to add the edges marked as transfer (dashed lines in Figure \ref{fig:g2sticks}-(b,c)). Transfers start from an instruction on a stick in a given stage $s_i$ and end with an instruction on another stick in stage $s_{i+1}$. As we can see in Figure \ref{fig:g2sticks}-(b), some qubits stay in the same stick for one or more stages. They form \textit{substicks} (green boxes in Figure \ref{fig:g2sticks}-(c)). The substicks belonging to the same stick cannot be separated from each other. 

Modifying the relative position of the substicks inside a given stick consists of an additional DoF, which can be exploited to further optimize the qubit mapping. The only way to internally modify a stick is to \textit{shuffle} its substicks, relative to the pair (or more) of neighboring grid locations. An example is given in Figure \ref{fig:g2sticks}-(d). Depending on the topology of the sticks, only a few possible combinations are available inside each stick. In particular, sticks containing two qubits per stage can only be swapped (into a given stick, all the left sided substicks are swapped with all the right sided substicks), while sticks containing a single qubit per stage cannot be shuffled at all. Thus, we can refer only to a list of combinations of a stick rather than to the full positioning of the substicks. In particular, sticks containing two qubits per stage have a single boolean variable indicating if the stick is swapped. This problem is addressed in more detail in the Appendix \ref{substick-combination}.

Figure \ref{fig:encoding-3d} shows a 3D view of the efficient encoding for a 2Z-arch introduced in Figure \ref{fig:g2sticks}-(b), where the vertical axis represents the different stages over time. The figure highlights the storage and entanglement zones, as well as the atom transfers between stages It shows that the degree of freedom for each stick is composed of a discrete 2D position plus a Boolean variable, i.e. the \textit{swap} state, for sticks with two qubits per stage. Note that both components are based on integers. In our algorithm, particular attention is given to the collision avoidance of the sticks, i.e. the discrete 2D position is not fully free.

We use a genetic algorithm (GA) without crossover to minimize the total traveled distance of the atoms and maximize the number of PTOs among the transitions. At each generation, every solution in the population is independently randomized to produce several candidate children, and only the highest-scoring solutions are retained for the next iteration. Since no crossover operation is used, exploration of the search space relies exclusively on local random modifications of the stick configurations, such as moving or internally shuffling sticks. More details are given in Appendix \ref{ga-details}. 

\section{Parallel Transfer Operations}
\label{section-transfer-operation}

\begin{figure}
    \centering
    \includegraphics[width=0.8\linewidth]{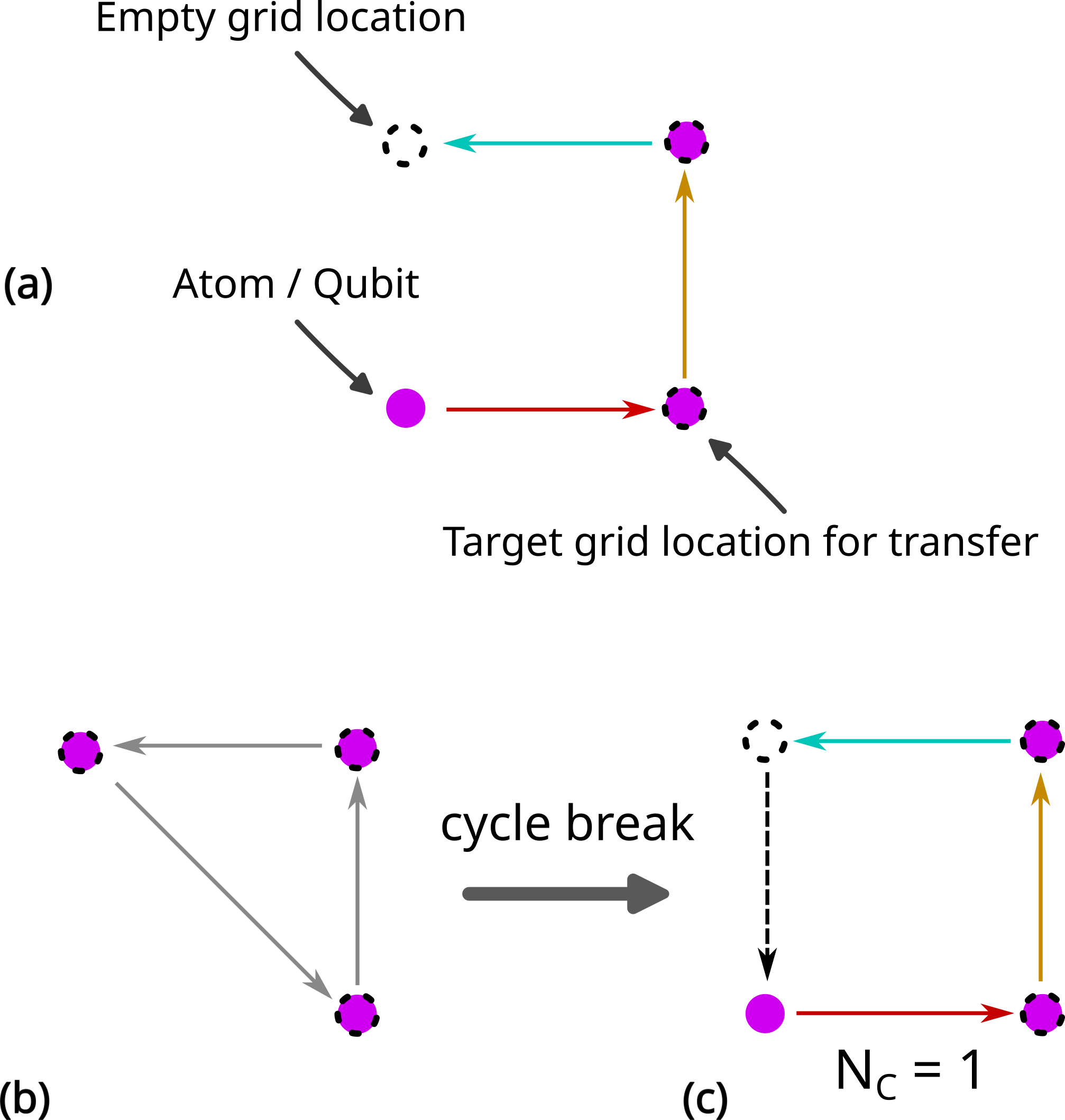}
    \caption{Simple $G_T$ without cycle (a), with cycle (b), and once the cycle is broken (c). The atoms are purple dots. The ending positions are dashed black circles. The transfers are represented by directed edges / arrows. In sub-figure (a), the time-ordered sequence of transfers is: \textit{cyan edge}, \textit{gold edge} and \textit{red edge}. Sub-figure (b) is a simple example of $G_T$ where the unique path is a cycle. The edges are gray: no transfer can be executed without physical collision of the atoms. Sub-figure (c) is the breaking process of the cycle in order to make the transition executable on the hardware. The dotted black edge is the additional transfer which must be processed ahead of any other transfer. Here the number of additional transfers is $N_C=1$, since there is only one transfer graph $G_T$. Subsequently the same sequence as in sub-figure (a) can be executed.}
    \label{fig:cycles}
\end{figure}

Using the stick encoding we determine the parallel transfer operations, give the atom transport constraints. These constraints limit the movements to avoid collisions between atoms. For that we define the transfer digraph $G_{T, i}(V_{T, i}, E_{T, i})$ for the transition $s_i \rightarrow s_{i + 1}$, abbreviated $G_T$ to indicate an arbitrary transition. The vertices in $V_{T, i}$ are the starting positions $\overrightarrow{a_l}=(a_{l,x}, a_{l,y})$ of each qubit $l$ in stage $s_i$, as well as the ending positions $\overrightarrow{b_l}=(b_{l,x}, b_{l,y})$ of the same respective qubits in stage $s_{i + 1}$. The edges in $E_{T, i}$ are the physical transfer of those qubits.

Any transfer digraph $G_T$ follows one of the two cases shown in Figure \ref{fig:cycles}. Figure \ref{fig:cycles}-(a) shows an example of $G_T$ as a single non-intersecting path. Every purple dot is an atom that needs to be physically moved to a different grid location (dashed black circles). Figure \ref{fig:cycles}-(b) shows an example of $G_T$ as an intersecting path, here a \textit{cycle}, and Figure \ref{fig:cycles}-(c) shows the final graph once the cycle is broken. In the first case (Figure \ref{fig:cycles}-(a)), a simple recursive algorithm can build the sequence of transfer operations required for the transition. In the second case, Figure \ref{fig:cycles}-(b), the cycle represented by successive transfers must be broken by transferring one of the atoms to an empty position before any other transfer operation. Every cycle requires a single additional transfer to be broken. We define $N_C$ as the number of cycles among all transitions, which also corresponds to the amount of additional transfers for the whole quantum circuit. More details on \textit{cycle breaking} can be found in the Appendix \ref{appendix-gt-for-to}.

For routing the atoms, we assume that there is enough space to allow a \textit{parking space} in addition to the \textit{travel space}. The parking space is an additional empty atomic slot reserved for temporary transfer during a transition, next to every other atomic slot. This parking space prevents finding an empty slot in the grid for the intermediate position $\overrightarrow{x}_l$ of the additional transfers. The travel space is the space reserved on the grid to freely move the atoms over the grid, which is also often assumed in other NAQC compilers \cite{lin2025reuseawarecompilationzonedquantum}. We see in Figure \ref{fig:cycles}-(c) that this procedure avoids any physical collision in between the atoms. See Appendix \ref{appendix-grid-routing} for further details on atom routing.

Once every cycle is resolved for a given $G_T$ and a given qubit mapping, it is possible to estimate the theoretical minimum number of qubit transfers to execute the related transition. In fact, $\lambda_q$ defined in Theorem \ref{theorem-global-opt-m} gives the lower limit in the number of transfers for any mapping of the quantum circuit. The following theorem estimates the minimal amount of transfers once the qubits are mapped on the atomic grid:

\begin{theorem}
\label{theorem-break-cycles}
    Given a mapping of qubits for all stages, the minimal amount of transfers required to run a scheduled quantum circuit $G$ is $\lambda_q + N_C$, where $N_C$ is the total number of cycles for all transitions.
\end{theorem}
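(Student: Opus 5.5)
We prove the statement transition by transition and then sum, relying on Theorem \ref{theorem-global-opt-m} (and its Z-arch generalisation, Theorem \ref{theorem-glob-opt-subgraph}) for the matching part. Fix a mapping and a transition $s_i \rightarrow s_{i+1}$, with transfer digraph $G_{T,i}$. Each dynamic qubit has a start vertex $\overrightarrow{a_l}$ and an end vertex $\overrightarrow{b_l}$, and every vertex has in-degree and out-degree at most one. So $G_{T,i}$ decomposes into vertex-disjoint directed paths and directed cycles, which are exactly the two cases of Figure \ref{fig:cycles}. Write $\lambda_{q,i}$ for the number of dynamic qubits in the transition (the transfer edges counted by the optimal matching) and $N_{C,i}$ for its number of cycles. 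We will show that the minimum number of elementary transfers for this transition is exactly $\lambda_{q,i}+N_{C,i}$. Summing over $i$ then gives $\lambda_q+N_C$, because the transitions are executed independently between stages.

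\emph{Upper bound (construction).} On a path component, order the moves from the head of the path backwards. The atom whose target is empty (or the parking space) moves first, which frees the target of the preceding atom, and so on. This is the recursive algorithm behind Figure \ref{fig:cycles}-(a), and it uses one transfer per edge. On a cycle, first move one atom $l$ to its reserved parking slot $\overrightarrow{x}_l$; the parking space assumption guarantees that this slot is empty and collision-free. The cycle has now become a path, which is executed as above, ending with $l$ moving from $\overrightarrow{x}_l$ to $\overrightarrow{b_l}$. Each cycle therefore costs exactly one extra transfer. The travel space assumption ensures that moving atoms do not collide en route. Hence $\lambda_{q,i}+N_{C,i}$ transfers suffice.

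\emph{Lower bound.} Every qubit with $\overrightarrow{a_l}\neq\overrightarrow{b_l}$ must be moved at least once. By Theorem \ref{theorem-global-opt-m} (respectively Theorem \ref{theorem-glob-opt-subgraph} for Z-arch), the number of such qubits summed over transitions is at least $\lambda_q$. It remains to show that each cycle forces one additional move, and this is the main step. Consider a cycle $C$ on positions $p_1\to p_2\to\dots\to p_k\to p_1$, all of which are occupied at stage $s_i$. Suppose every atom of $C$ moves exactly once, directly to its target. We argue by induction on time that the first move among atoms of $C$ targets a site that is still occupied by another atom of $C$, since that atom has not yet moved. This move is a collision, which is forbidden, so it cannot occur. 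Hence at least one atom of $C$ must make an intermediate move, giving $\geq k+1$ transfers on $C$. Cycles are vertex-disjoint, so these extra moves are distinct.

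The delicate point is parallelism: a PTO moves many atoms simultaneously, and a simultaneous cyclic shift might appear to avoid the extra move. We will rule this out using the constraint stated in Section \ref{section-naqc}: an atom cannot enter a filled site, and simultaneously vacated and entered sites within one AOD move would require atoms to cross. We therefore count transfers as individual atom displacements, regardless of their grouping into PTOs. Combining the two bounds for each transition and summing yields the minimum $\lambda_q+N_C$.
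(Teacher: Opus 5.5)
\emph{Verdict: the upper bound and the sequential lower bound are sound, but the simultaneous-move (PTO) case of the lower bound has a genuine gap, fixable using the AOD constraints.}

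\emph{Upper bound.} This is the paper's own argument: send one atom of each cycle to a free parking slot, then run every path from its leaf backwards. The paper's proof essentially stops there. It asserts necessity only through the remark that ``any order of execution of the transfers in a cycle leads to a physical collision.'' Your lower bound is therefore the real addition, and it has a gap exactly at the point you flag.

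\emph{Where the lower bound fails.} The step ``the first move among atoms of $C$ targets a site still occupied'' presupposes that moves are totally ordered in time. That fails once transfers are grouped into PTOs: the first PTO touching $C$ may move several atoms of $C$ at once, including the occupant of the target site. Your remedy for this case is not correct as stated.
\begin{itemize}
\item Simultaneously vacating and entering sites does not in general require crossing. Atoms at $x=0,1,2$ of one row can be shifted to $x=1,2,3$ in a single AOD move; every pair satisfies (\ref{aod_constraint_b}) in $x$ and (\ref{aod_constraint_c}) in $y$.
\item The rule that an atom cannot enter a filled site does not by itself forbid a simultaneous cyclic rotation. With independently steered tweezers and the travel-space assumption, a two-atom swap could be done with two moves.
\end{itemize}
What excludes the rotation is specifically the AOD order relations (\ref{aod_constraint_a})--(\ref{aod_constraint_c}). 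So the minimality claim depends on the hardware model, and your proof must invoke those relations.

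\emph{A repair.} Suppose every atom of a cycle $C$ moves exactly once, directly from $\overrightarrow{a_l}$ to $\overrightarrow{b_l}$. Let $S\subseteq C$ be the atoms moved by the first PTO touching $C$.
\begin{itemize}
\item If $S\neq C$, walking along the cycle yields $l\in S$ whose successor $l'$ (the atom with $\overrightarrow{a_{l'}}=\overrightarrow{b_l}$) is not in $S$. Atom $l'$ has not moved and does not move in this PTO, so $l$ lands on an occupied site.
\item If $S=C$, apply the relations pairwise inside $C$. They make $a_{l,x}\mapsto b_{l,x}$ a well-defined, strictly increasing map from $X_C=\{a_{l,x}: l\in C\}$ onto $\{b_{l,x}: l\in C\}$. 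A cycle's start and end positions form the same set, so this is an increasing bijection of the finite set $X_C$ onto itself, hence the identity. The same holds in $y$, so $\overrightarrow{b_l}=\overrightarrow{a_l}$ for all $l\in C$, a contradiction.
\end{itemize}
Thus some atom of each cycle moves at least twice. Cycles are atom-disjoint, so these extra moves are distinct.

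\emph{Hidden hypothesis.} State explicitly the assumption you slipped in with ``the transfer edges counted by the optimal matching'': the given mapping must realise the matchings $\mathcal{M}_i$, as stick-encoded mappings do. For an arbitrary mapping, the number of atoms with $\overrightarrow{a_l}\neq\overrightarrow{b_l}$ can exceed $\lambda_q$. Your construction then gives that number plus $N_C$, not $\lambda_q+N_C$.
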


From the transfer digrapsh, the PTOs are determined by taking into account the constraints of the atom transport. As an example, let us only consider the constraints along the X-axis of the grid from Figure \ref{fig:encoding-3d}. For two atoms $q_l$ and $q_{l'}$ transferred from positions $(a_{l,x}, a_{l,y})$ and $(a_{l',x}, a_{l',y})$ to $(b_{l,x}, b_{l,y})$ and $(b_{l',x}, b_{l',y})$, respectively, the transfers are compatible if and only if one of the following conditions is met: 
\begin{align}
a_{l,x} > a_{l',x} &\wedge b_{l,x} > b_{l',x} \label{aod_constraint_a} \\
a_{l,x} < a_{l',x} &\wedge b_{l,x} < b_{l',x} \label{aod_constraint_b}\\
a_{l,x} == a_{l',x} &\wedge b_{l,x} == b_{l',x} \label{aod_constraint_c}
\end{align}
Then, for transfers to be compatible in the XY plane, one of those three conditions must also hold for the Y axis. The constraints of the AODs on the optical tweezers allow simultaneous transfers to be compatible under the conditions listed in the relations \ref{aod_constraint_a}, \ref{aod_constraint_b} and \ref{aod_constraint_c}. Using the constraints from the AOD and avoiding atom collisions, we build and greedy algorithm that builds the PTOs. We describe the logic sequence of action evaluated by this greedy algorithm in Appendix \ref{pto_sequence}.

\section{Results and Comparison}
\label{section-results}

\begin{figure*}
    \centering
    \includegraphics[width=1.0\textwidth]{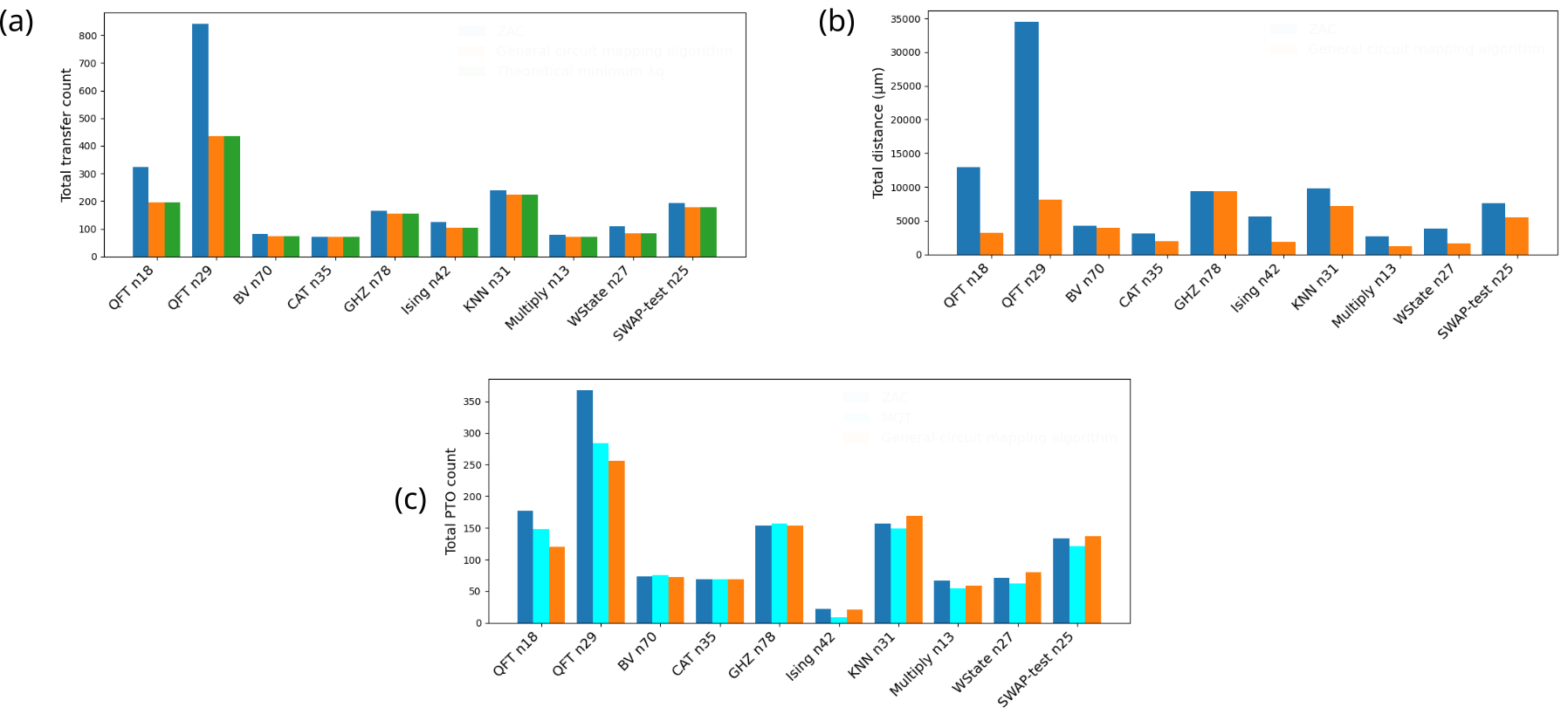}
    \caption{Results of the benchmark for ten different quantum circuits. Our mapping algorithm is in orange, ZAC is in dark blue. Sub-figure (a) shows the total transfer count for each circuit. The theoretical minimum that can be reached is in green. Sub-figure (b) shows the total distance traveled by the atoms for each circuit. Sub-figure (c) shows the total PTO count required to run each of the ten quantum circuits. MQT is in cyan.}
    \label{fig:results}
\end{figure*}

\begin{figure*}
    \centering
    \includegraphics[width=1.0\textwidth]{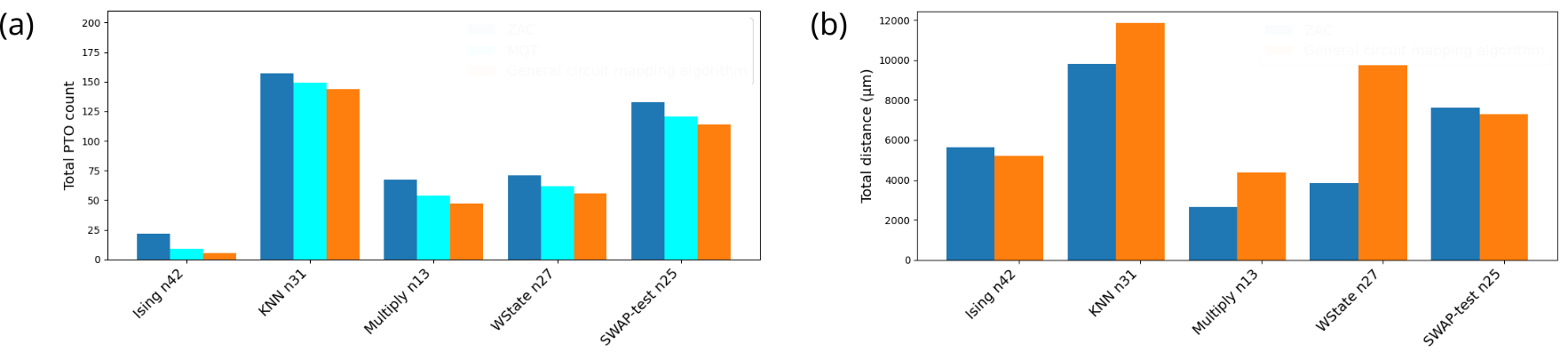}
    \caption{Benchmark results for different GA parameters. For Ising circuit, the GA is not executed. For KNN, Multiply, WState, and SWAP-test circuits, the GA measures the inverse of the PTO count as the score instead of the inverse of the total distance. Our mapping algorithm is in orange, ZAC is in dark blue. Sub-figure (a) shows the total PTO count required to run each quantum circuits. MQT in cyan. Sub-figure (b) shows the total distance traveled by the atoms for each circuit.}
    \label{fig:results-special}
\end{figure*}

The performance of the circuit mapping algorithm we developed in the previous section can be evaluated numerically for different quantum circuits and compared to the ZAC and MQT mappers. The ZAC and MQT features make these two mappers relevant for a benchmarking of the proposed general mapping solution, since, to our knowledge, they are the most advanced compiler for NAQC platforms considering zoned architectures. The configuration of the emulated hardware is a 2Z-arch composed of an entanglement and storage zones without a dedicated readout zone. ZAC utilizes a similar configuration. More details about the configuration can be found in the Appendix \ref{bench-config}. The final mapping solution for each benchmarked quantum circuit can be found in \cite{GCMABench}.

As shown in Figure \ref{fig:results}-(a), we evaluate and compare the total transfer count, the total PTO count, and the total distance traveled by all atoms for various quantum circuits. These three parameters directly influence the execution time and total fidelity of the quantum circuit. The \textit{total} term signifies that the values are respectively summed on all the transitions involved in each scheduled quantum circuit. We compare all three parameters to the ZAC mapper and additionally include a comparison for the PTO counts with the MQT mapper, as it does not provide the other parameters.

The total transfers, the total PTO counts, and the total distances traveled by the atoms in micrometers for each quantum circuit are shown in Figure \ref{fig:results}-(a), (b) and (c), respectively. In Figure \ref{fig:results}-(a), we additionally plot the theoretical limit (Theorem \ref{theorem-glob-opt-subgraph}). Note that this limit is given by $\lambda_q$ and not by $\lambda_q + N_C$, since cycles are very rare cases for the entire set of algorithms tested. Hence, the addition of one or two transfer counts is not visible in the figures. Moreover, $N_C$ depends on the final mapping, thus $N_C$ also depends on the final result given by the GA. Therefore, if the best mapping found in our benchmark contains cycles, alternative mappings can give fewer cycles.

From the results in Figure \ref{fig:results}-(a), we first observe that the proposed general mapping solution always achieves the theoretical minimum of transfers (up to $N_C = 1$ or $N_C = 2$). Although ZAC uses intrinsically bipartite graphs for qubit reuse and transfers, ZAC does not necessarily reach the theoretical minimum of transfers. This result highlights that our graph-theoretic combinatorial optimization fully captures the mapping complexity of quantum circuits, and optimizing the transfer for each transition leads to the global optimum.

Secondly, we observe that there exists a strong link between the total amount of transfers, the total distance traveled by the atoms, and the total PTO count. In fact, it is clear that more transfers implicitly increase both the total distance traveled by the atoms and the total number of PTOs. However, we note that the deviation in the total transfer count between ZAC and the proposed general approach do not explain the deviation in total distance traveled for all the algorithms tested. For example, there is a difference of approximately $100 * (840 / 436 - 1) \simeq 92\%$ in the total transfer count for the QFT (n29) circuit, while there is a difference of approximately $100 * (34434 / 7949 - 1) \simeq 333\%$ in the total distance traveled by the atoms for the same circuit. 

Our advantage comes from the separation of the qubit transfer decision (Section \ref{section_transfers}) and the efficient encoding of sticks/substicks (Section \ref{qubit-mapping-encoding}). This disentanglement allows us to optimize the distance traveled by the atoms through the entire set of stages while keeping the amount of atom transfers constant. In parallel, ZAC does not implement any specific encoding of the problem and considers the stage as successive 2D grids without temporal link. Thus, changing the position of an atom does not influence its positions in the next stages. In addition, ZAC implements a weighted approach of the simulated annealing (SA) in which the distance traveled by the atoms has a greater impact on the early stages of the circuit. Therefore, by setting the inverse of the total distance as the discriminatory value of the GA, our proposed approach finds a better optimization of the total distance over all tested quantum circuits.

Overall, we observe that, except for QFT, all the tested circuits present a few advantages or even disadvantages in terns of PTO count compared to ZAC or MQT mappers. The circuits BV, CAT and GHZ are circuits with simple and similar topologies (i.e. equivalent entanglement complexity), which make them very straightforward to optimize. This simplicity, in terms of optimization, explains the small deviation between the proposed approach and ZAC/MQT.

The usefulness of the GA in our approach can be highlighted when running five quantum circuits for different GA parameters, as shown in Figure \ref{fig:results-special}. For the KNN, Mulitply, WState and SWAP-test circuits, we modify the GA by setting the inverse of the PTO count as the discriminatory value instead of the inverse of the distance for all GA passes \ref{ga-details}. We observe an advantage of the proposed approach compared to ZAC/MQT in terms of the PTO count. However, the KNN, Mulitply and WState circuits present a disadvantage for the total distance. Compared to the results shown in Figure \ref{fig:results}-(b), the SWAP-test circuit optimized with this version of the GA exhibits fewer advantages over ZAC in the total distance traveled by the atoms. The topology of the Ising circuit is such that a dummy placement of the qubits on the grid (before running the GA) is already sufficient to minimize the PTO count. The total transfer count is not modified, or at most by $N_C = 1$ or $N_C = 2$. We clearly obtain fewer PTOs than in \ref{fig:results}-(c). However, the total distance is much less minimized than in Figure \ref{fig:results}-(b) where the GA is run.

The graph framework that underpins our approach leads to fewer PTO counts. We note that despite the MQT mapper providing a better optimization of the PTOs compared to ZAC, we constantly find fewer PTO over the full benchmark when considering the cases where the PTO count is the discriminatory value of the GA. In fact, the efficient encoding of sticks/substicks allows to optimize the PTO count through all the stages in parallel, hence leading to a better approximation of the global optimum.

Comparing Figure \ref{fig:results} to Figure \ref{fig:results-special}, we cannot conclude whether our optimal mappings reduce the global execution time of the quantum circuit since this will depend on the parameters of a given hardware. For example, if SLM/AOD trap switchings are less efficient than the speed of the optical tweezers, then the mapping minimizing the PTO count is more advantageous. By contrast, if the speed of the optical tweezers is less efficient than that of the SLM/AOD trap switching, then the mapping minimizing the traveled distance is more advantageous. More generally, it is recommended to run the proposed mapper multiple times with/without the GA or with different parameters of the GA. The user can then choose the best mapping solution.

The results presented in Figure \ref{fig:results-special}-(a) and Figure \ref{fig:results-special}-(b) highlight the unbalance and trade-off that quantum programmers must be aware of during the compilation process. In fact, one of the advantages of using such global optimum approximators (GA or SA) is the ability to modify the measured discriminatory value according to the needs. For real applications, it is possible to set the distance plus the PTOs as discriminatory values, weighted by factors that represent actual hardware parameters, for example. To go further, given explicit details of the targeted NAQC, it is also possible to set the measured score of the GA/SA as the global fidelity of the circuit or execution time, etc. Although the optimization process can require substantially more computational power in that case, the resulting mapping would be tailored for predictable execution such as e.g. global fidelity and execution time.

\section{Conclusion}

In this work \cite{Gentil2026Patent}, we proposed a mathematical framework and an optimization procedure of the mapping problem for quantum circuits targeting NAQC, where atom transfer is a costly and constrained operation. Our method is general, scalable, circuit topologies aware, and quantum gate-agnostic. Its adaptability makes it promising for a wide range of neutral atom platforms. 

The proposed general mathematical framework is based on graph theory to determine the minimal number of transfers required to execute any quantum circuit. This theory applies to monolithic architectures but also to zoned architectures, an advantageous feature of NAQC platforms where gate operations, readout process, and storage are localized in distinct spatial zones. For future NAQC platforms, it is natural to predict additional zones dedicated to, for example, multi-qubit gates such as the CCZ gate. Those additional zones are naturally incorporated into the developed framework. Then, building on this theoretical foundation, we tackled the practical problem of mapping qubits to physical atom positions to minimize the total traveled distance and the number of PTOs. We introduced a novel encoding of this nonlinear integer problem and applied a genetic algorithm to explore the configuration space.

Our method reveals a trade-off between movement efficiency and transfer parallelism. In fact, we consistently find fewer or the same number of transfers as ZAC \cite{lin2025reuseawarecompilationzonedquantum}, a recent state-of-the-art compiler, across all tested circuits. When optimizing the genetic algorithm for minimizing distance, we achieve significantly shorter traveled distances, albeit without the guarantee of fewer PTOs. In contrast, prioritizing PTO minimization leads to fewer parallel operations compared to both ZAC and MQT mappers, sometimes at the cost of longer transfers.

In practice, tailoring the optimization procedure towards movement efficiency and transfer parallelism can be made quantitatively by weighting the relative cost of PTOs and traveled distance according to hardware-specific parameters, such as the ratio between SLM/AOD switching times and optical tweezer movement speeds. This trade-off provides flexibility to quantum programmers and compiler developers, allowing them to tailor compilation according to the characteristics or constraints of the hardware. More specifically, the optimization objective can be extended to directly approximate execution time or circuit fidelity, enabling the mapper to produce hardware-calibrated solutions at the expense of increased computational effort.

\begin{acknowledgments}
This research is financially supported by the Dutch Research Council (NWO) and the National Growth Fund programme QDNL through the Research programme NGF-Quantum Delta NL 2023, under project number NGF.1623.23.030. We thank the TU/e members of the Neutral Atom Kat-1 Collaboration (www.tue.nl/rydbergQC) for valuable discussions and in particular Raul Parcelas Resina dos Santos and Emre Akaturk for useful comments on the manuscript.
\end{acknowledgments}

\bibliographystyle{quantum}

\onecolumn
\appendix

\section{Graph Theory Of Optimum Qubit Transfers}
\label{gt-of-opt-transf}

The number of instructions involved inside a stage $s_i$ is $N_{\theta,i}$. An instruction $\theta_{i, j}$ has a set of qubits $Q_{i, j} \defeq \{q_k, q_l, q_m,\, ...\}$. An instruction containing only a single qubit gate ensures that $|Q_{i, j}| = 1$, whereas an instruction containing multi-qubit gates has the lower bound $|Q_{i, j}| \geqslant 2$. The set of qubits exchanged from $\theta_{i, j}$ to $\theta_{i + 1, k}$ is therefore $\Gamma_{i,j,k} \defeq Q_{i, j} \; \cap \; Q_{i+1, k}$. Because each qubit is associated with exactly one instruction, there exists a set $\Theta_{i} \defeq \{\theta_{0, j_0}, \;...\; , \theta_{N_s - 1, j_{N_s - 1}}\}$ for each qubit where $j_k \in \llbracket 0; N_{\theta,k} - 1\rrbracket$. 

The general quantum digraph $G(V, E)$ representing the circuit can be summarized as follows:
\begin{itemize}
    \item Each instruction $\theta_{i,j}$ is a vertex: $V \defeq \{\theta_{i,j}\}$
    \item An edge from $\theta_{i, j}$ to $\theta_{i+1, k}$ exists iff $\Gamma_{i,j,k} \neq \varnothing$
    \item Each existing edge $e = (\theta_{i, j}, \theta_{i+1, k})$ has a weight $\omega(e) \defeq |\Gamma_{i,j,k}|$
\end{itemize}
where $i \in \llbracket0; N_s-1\rrbracket$ and $ j,k \in \llbracket 0; N_{\theta,i} - 1\rrbracket$. From the sequences of instructions $\Theta_i$, there exists a subgraph $P_i(V_{P_i}, E_{P_i})$ for each qubit $q_i$:
\begin{align}
    V_{P_i} = \;&\Theta_{i} \\
    E_{P_i} = \;&\{(\theta_{0, j_0}, \theta_{1,j_1}), \; ... \; , (\theta_{N_s - 2, j_{N_s - 2}}, \theta_{N_s - 1, j_{N_s - 1}})\})
\end{align}
where $j_k \in \llbracket 0; N_{\theta,k} - 1\rrbracket$. Note that $|P_i| \defeq |V_{P_i}| = N_s$. By definition, the subgraphs $P_i$ are paths of $G$ and represent the successive instructions that each qubit undergoes through the stages.

\begin{lemma}
\label{lemma-matching}
    A matching $M_{i}$ of $H_i$ is a valid physical transfer configuration from stage $s_i$ to the next stage $s_{i+1}$.
\end{lemma}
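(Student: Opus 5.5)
We make precise what a \emph{valid physical transfer configuration} from $s_i$ to $s_{i+1}$ means, and then check that every matching of $H_i$ yields one. In the M-arch setting each instruction $\theta_{i,j}$ occupies a single position (a block of neighbouring grid sites) at stage $s_i$. A transfer configuration assigns to every edge $e=(\theta_{i,j},\theta_{i+1,k})$ of $H_i$ one of two labels. An \emph{exchange} edge means the qubits of $\Gamma_{i,j,k}$ stay at their sites, so $\theta_{i+1,k}$ is executed at the position of $\theta_{i,j}$. A \emph{transfer} edge means the qubits of $\Gamma_{i,j,k}$ are physically moved. We call such a labelling valid if every instruction of $s_{i+1}$ ends up with one well-defined location, and every location of $s_i$ hosts at most one instruction of $s_{i+1}$. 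Given a matching $M_i$, we label the edges of $M_i$ as exchange and all remaining edges of $H_i$ as transfer, and we show the labelling is valid.

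The proof then checks the two consistency conditions separately.

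\emph{First condition.} Take a vertex $\theta_{i+1,k}$. Since $M_i$ is a matching, at most one edge of $M_i$ is incident to $\theta_{i+1,k}$.
\begin{itemize}
\item If one such edge $(\theta_{i,j},\theta_{i+1,k})$ exists, the location of $\theta_{i+1,k}$ is forced to be that of $\theta_{i,j}$. Every other qubit of $Q_{i+1,k}$ reaches it through a transfer edge, so there is no conflict.
\item If none exists, every qubit of $Q_{i+1,k}$ is transferred. The instruction can then be placed at any free location, which exists under the grid-space assumption.
\end{itemize}

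\emph{Second condition.} Take a vertex $\theta_{i,j}$. At most one edge of $M_i$ leaves it, so its position is claimed as static by at most one instruction of stage $s_{i+1}$. The qubits of $Q_{i,j}$ on other edges leave by transfer. Hence no two distinct instructions of $s_{i+1}$ are forced onto the same location.

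Two further points need recording. First, each qubit lies on exactly one edge of $H_i$: by construction each qubit belongs to exactly one instruction per stage, so its path $P_l$ contributes one edge between $s_i$ and $s_{i+1}$. Every qubit is therefore classified unambiguously as static or dynamic, and the number of static qubits equals $\Omega(M_i)=\sum_{e\in M_i}\omega(e)$. Second, within a multi-qubit instruction the qubits are distinct, and each sits at its own neighbouring site, as in the stick encoding. So static qubits keep their sites, while transferred qubits are routed into the vacant sites of the target block.

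The main obstacle is not combinatorial but definitional: \emph{validity} must be formalised so that the matching condition is exactly what it needs. The approach is to state the two consistency conditions explicitly and to note that they say precisely that the exchange edges form a matching. Violating the first condition means two static edges enter one $\theta_{i+1,k}$, forcing one instruction to two places, which is the impossibility remarked in Section~\ref{section_transfers}. Violating the second means two static edges leave one $\theta_{i,j}$, forcing two instructions to share one location. Collision-free routing of the transferred qubits is deferred to Section~\ref{section-transfer-operation} and Theorem~\ref{theorem-break-cycles}, so the lemma only asserts consistency of the static/transfer labelling. If one also wants the converse, that every valid configuration arises from a matching, it follows by the same two observations run in reverse.
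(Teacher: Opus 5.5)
Your proposal is correct and follows essentially the same argument as the paper. Your two consistency conditions are exactly the paper's cases 2 and 3 (at most one static, i.e.\ black, edge incident to each $\theta_{i+1,k}$ and to each $\theta_{i,j}$), with the remaining edges free to be transfers as in its case 1; the only difference is orientation, since the paper argues by contradiction that the black edges of any valid configuration form a matching and then asserts the equivalence, whereas you verify the stated direction (matching $\Rightarrow$ valid) directly and make the definition of validity and the one-edge-per-qubit bookkeeping explicit.
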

\begin{proof}
    We recall that, in a graph $G'(V',E')$, a matching is a subset $M \subseteq E'$ such that no two edges in $M$ share a common vertex. Moreover, we define a coloring for a given edge $(\theta_{i, j}, \theta_{i + 1, k})$:
    \begin{itemize}
        \item The edge is colored black if and only if (iff) all the qubits in $\Gamma_{i,j,k}$ are static.
        \item The edge is colored red iff all the qubits in $\Gamma_{i,j,k}$ are dynamic.
    \end{itemize}
    Note that an edge with, simultaneously, static and dynamic qubits is physically a nonsense.
    
    In $H_i$, there are three different cases:
    \begin{enumerate}
        \item The vertices $\theta_{i, j}$ and $\theta_{i + 1, k}$ of a given edge are not shared with any other edges. Then, the edge is equivalently colored black or red.
        \item Two or more edges share the same endpoint $\theta_{i + 1, k}$. Let us assume that two or more of those edges with endpoint $\theta_{i + 1, k}$ are colored black. Since all the instructions in $s_i$ have a unique physical position, it means that at least two qubits of different instructions $\theta_{i, j}$ and $\theta_{i, j'}$ must reach the same target instruction $\theta_{i+1, k}$, while being static. It is absurd, and, by contradiction, at most one edge must be colored black.
        \item Two or more edges share the same vertex $\theta_{i, j}$. Similarly, the same conclusion as for the second case holds: two qubits cannot move away from each other while being static and at most one edge is colored black.
    \end{enumerate}
    We deduce that a configuration is valid iff each vertex in $H_i$ has at most one incident black edge. Therefore, the set of edges colored black is, by definition, a matching.
\end{proof}

The score of a matching $M_{i}$ is defined as the amount of static qubit assigned by the edges in $M_{i}$:
\begin{equation}
    \Omega(M_i) = \sum_{e \in {M_i}} \omega(e)
\end{equation}
Depending on the topology of $H_i$, several matchings $M_{i}$ can exist. We relabel them $M_{i,0}$, $M_{i,1}$, etc.
\begin{lemma}
\label{lemma-local-opt}
    The minimal amount of qubit transfers, that is, the maximum amount of static qubits, from $s_i$ to $s_{i+1}$ is given by the maximum weighted matching $\mathcal{M}_i$ of $H_i$. $\mathcal{M}_i$ is not necessarily unique and we note them $\mathcal{M}_{i,0}$, $\mathcal{M}_{i,1}$, etc.
\end{lemma}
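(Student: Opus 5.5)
The plan is to show a bijective, cost-preserving correspondence between valid transfer configurations for the transition $s_i \rightarrow s_{i+1}$ and matchings of $H_i$. Once that is in place, minimizing the number of transfers is the same as maximizing $\Omega$ over matchings. First I would note the conservation identity that drives everything. Every qubit belongs to exactly one instruction in $s_i$ and exactly one in $s_{i+1}$, so the sets $\Gamma_{i,j,k}$ partition the register. Hence
\begin{equation}
\sum_{e \in E(H_i)} \omega(e) = N_q .
\end{equation}
In a valid configuration each edge of $H_i$ is colored black (all its qubits static) or red (all dynamic). The number of static qubits is therefore $\sum_{e\ \text{black}} \omega(e)$, and the number of transfers is $N_q$ minus that quantity.

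Next I would invoke Lemma \ref{lemma-matching}: in any valid configuration the black edges form a matching $M_i$, and its static-qubit count is exactly $\Omega(M_i)$. For the converse, I would argue that every matching $M_i$ is realizable. Declare the edges of $M_i$ black and all others red. Because no two black edges share an endpoint, each instruction $\theta_{i+1,k}$ receives static qubits from at most one $\theta_{i,j}$. We can then place $\theta_{i+1,k}$ at the location of that $\theta_{i,j}$, or at a fresh location if it has no black edge, and move every qubit on a red edge. This uses the assumption of the previous sections that enough grid space exists. The achievable sets of static qubits are then exactly the sets counted by $\Omega(M_i)$ for matchings $M_i$ of $H_i$.

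Combining the two steps, the number of transfers equals $N_q - \Omega(M_i)$ over all matchings. It is minimized precisely when $\Omega(M_i)$ is maximal, that is, at a maximum weighted matching $\mathcal{M}_i$. The maximum exists because $H_i$ is finite. Non-uniqueness is shown by a small example such as Figure \ref{fig:qcirctom}-(c), where $M_0, M_1, M_2$ attain the same score.

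I expect the converse realizability step to be the main obstacle, since Lemma \ref{lemma-matching} as proved only gives the necessity direction. I would treat it briefly: a black edge only requires the target instruction to reuse the source's position, and the matching property guarantees these position assignments never conflict.
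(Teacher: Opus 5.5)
Your proposal is correct and follows the paper's own argument. The paper likewise writes the transfer count of a configuration as $N_q - \Omega(M_i)$, reduces minimizing transfers to maximizing $\Omega$ over matchings of $H_i$, and uses Lemma~\ref{lemma-matching} to tie configurations to matchings. Your version is more careful than the paper's. It makes the identity $\sum_{e}\omega(e)=N_q$ explicit, and it proves the sufficiency direction that the paper only asserts; as you observe, the paper's argument for Lemma~\ref{lemma-matching} really establishes only necessity. One caveat: ``realizable'' should mean a consistent static/dynamic position assignment. Actually executing the red moves can require the extra cycle-breaking transfers of Theorem~\ref{theorem-break-cycles}, which this lemma does not count.
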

\begin{proof}
    The amount of transfers for each $H_i$ with matching $M_{i,j}$ is given by $N_q - \Omega(M_{i,j})$: this is the actual qubit transfer cost for transition $s_i \rightarrow s_{i+1}$. Hence, maximizing $\Omega$ allows one to find the minimal amount of qubit transfers. The maximum of $\Omega$ is given by $\text{max}_{M \in \{M_{i,0}, M_{i,1}, ...\}}(\Omega(M)) = \mathcal{M}_i$ which is by definition the maximum weighted matching of $H_{i}$. In the case of multiple equivalent maximum weighted matching $\mathcal{M}_{i,0}$, $\mathcal{M}_{i,1}$, etc, then $\Omega(\mathcal{M}_{i,0}) = \Omega(\mathcal{M}_{i,1}) = \,...$ by definition.
\end{proof}

Lemma \ref{lemma-local-opt} gives the local optimum for the transfer of qubits. It is then possible to find the global optimum, i.e. the minimal amount of transfers for the whole quantum circuit. For a given $H_i$, we arbitrarily select $\mathcal{M}_i$ among $\mathcal{M}_{i,0}$, $\mathcal{M}_{i,1}$, etc., since each of those maximum weighted matchings has the same score.

We give the proof for Theorem \ref{theorem-global-opt-m}:
\begin{proof}
    Given lemma \ref{lemma-local-opt}, it suffices to prove that the $\mathcal{M}_i$ are independent of each other. In fact, the choice of the static/dynamic qubits from $s_i$ to $s_{i+1}$ does not influence the sets $Q_{i+1,j}$ or $Q_{i+2,j}$. Moreover, the sets $Q_{i,j}$ fully determine the graphs $H_i$ by definition. Hence, the graphs $H_i$ are independent of each other. As every matching $\mathcal{M}_{i}$ is only determined by the topology of $H_i$, the matching $\mathcal{M}_{i}$ is also independent of each other matchings $\mathcal{M}_{j}$. By independence of the local optimums, the global optimum is obtain by summing the local optimum.
\end{proof}

After generalization of Theorem \ref{theorem-global-opt-m} for Z-arch, the proof of Theorem \ref{theorem-glob-opt-subgraph} follows:
\begin{proof}
We consider the underlying undirected graph of $G$. For every instruction $\theta_{i, j}$ in $Z_m$ and $\theta_{i', j'}$ in $Z_n$ with $m \neq n$, the set that contains the existing edges $e = \{\theta_{i, j}, \theta_{i', j'}\}$ is defined as $E_{m,n}$. The edges $E_{m,n}$ connect the subgraphs $G_m$ and $G_n$ such that $E_0 \cup E_{0,1} \cup E_{0,2} \cup \;... \cup E_1 \cup E_{1,2}\; ... = E$. Note that the sets $E_{m,n}$ and $E_{n,m}$ are equivalent.

Let us assume that the number of transfers is globally minimized in $G$ and the edges are already colored black/red. We put the edges colored in red (transfer) in the set $T$ and the edges colored in black (static) in the set $S$. There are two cases:
\begin{enumerate}
    \item $\forall (p,q) \in \llbracket 0; N_Z - 1\rrbracket^2, p \neq q,\, E_{p,q} \subseteq T$. In other words, all edges connecting the subgraphs $G_n$ have their underlying qubits defined as dynamic. The transfer cost associated with the sets $E_{m,n}$ is then $\sum_n\sum_{m \neq n}\sum_{e \in E_{m,n}} \omega(e)$. This is the sum of the weights of each of the edges that connect the subgraphs $G_n$. The optimal configuration transfer found is therefore equivalent to applying theorem \ref{theorem-global-opt-m} to each subgraph $G_n$. The minimum amount of dynamic qubit for each $G_n$ is defined as $\lambda_{q,n}$. The global minimum of dynamic qubits is then:
    \begin{equation}
        \lambda_q = \sum_{n} \lambda_{q,n} + \sum_n\sum_{m \neq n}\;\;\sum_{e \in E_{m,n}} \omega(e)
    \end{equation}
    \item $\exists (p,q) \in \llbracket 0; N_Z - 1\rrbracket^2, p \neq q,\, E_{p,q} \cap S \neq \varnothing$. In other words, there is at least one edge connecting the zones that is defined as static: for example, the qubits in the set $\Gamma_{i,j,k}$ undergo an instruction $\theta_{i,j}$ in a zone $Z_m$ and a instruction $\theta_{i+1,k}$ in a zone $Z_n$ for the next stage without any movement. It is physically impossible since the zones are spacially separated by definition of Z-arch. Hence, this transfer configuration found globally in $G$ is invalid.
\end{enumerate}
In summary, in order to find a valid transfer configuration globally in $G$, Theorem \ref{theorem-global-opt-m} must be applied independently to each subgraph $G_n$.
\end{proof}

\section{Stick Combinations}
\label{substick-combination}

\begin{figure}
    \centering
    \includegraphics[width=0.4\linewidth]{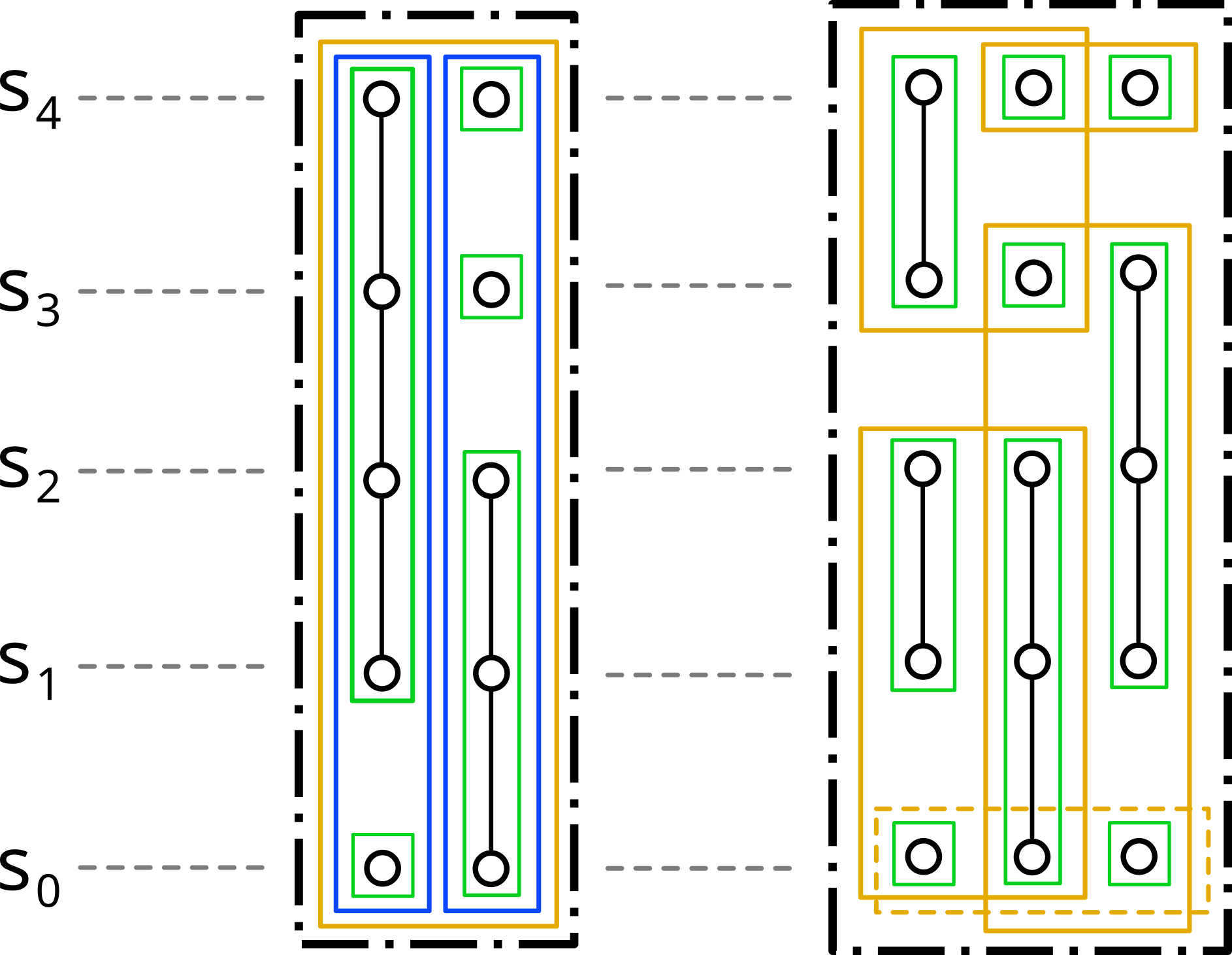}
    \caption{The dashed black box on the left is an example of stick where the substicks are represented as green boxes. The orange boxes are the \textit{forks}. The blue boxes are the \textit{tines}. Only the tines constituting the forks can be permuted. Thus, the combinations are encoded into the GA by a Boolean variable. The dashed black box on the right is an example of stick for three-qubit instructions / gates. Only the forks are represented for simplicity. The total amount of possibilities for the combinations increases non-linearly with the amount of qubits per stage. Note that the fork in the bottom of the three-qubit stick is dotted: this fork includes the two qubits on the right/left, not the middle qubit.}
    \label{fig:stick-comb}
\end{figure}

An example of a stick, represented by dashed black lines, is shown on the left of Figure \ref{fig:stick-comb}. This stick is associated with two-qubit instructions. Therefore, this stick lies in the entanglement zone. The length of this stick is 4. It contains 5 substicks highlighted with green boxes. The orange boxes are called \textit{forks} and the blue boxes are called \textit{tines}. We can see that for each transition $s_i \rightarrow s_{i+1}$ at least one atom is static. Therefore, the only way to shuffle forks is to permute the tines. Hence, the combination in this case is limited to a binary choice represented by a Boolean variable. Initially, the tines are randomly placed right or left, and the Boolean variable determines whether the initial placement is permuted or not. Hence, a stick in the storage zone can contain only one substick with one fork and one tine: no combination is available. The same would apply to a hypothetical readout zone.

The idea of combinations can be generalized to sticks associated with multi-qubits instructions. An example is shown on the right of Figure \ref{fig:stick-comb}. The number of forks can grow drastically depending on the topology of the stick. The general definition is the following.
\begin{definition}
\label{def-fork}
Any set of substicks that lies on exactly two tines and \textit{fully fills} the given tines is a fork.
\end{definition}
\textit{Fully filled} means that given a stage $s_i$, exactly two different substicks are involved in $s_i$: there is no absence of substick in any of the tines of any of the given stage. Therefore, the complete description of a stick, apart from its position, is a list of Boolean variables, i.e. one variable per fork. 

The definition \ref{def-fork} gives an idea for a general algorithm to find all the forks of a given stick. However, executing and evaluating quantum circuits involving multi-qubit gates (with more than two qubits) is beyond the scope of the present work. Hence, no algorithm or benchmark is provided for general multi-qubit gates.

\section{Genetic Algorithm}
\label{ga-details}

The total traveled distance and PTO estimation are non-linear constraints. Together with the chosen encoding, the mapping problem forms a nonlinear integer problem (NLIP). NLIPs are typically NP-hard, tending to confirm that the mapping problem is also NP-hard. General nonlinear integer programming problems are NP-hard, and exact methods typically become computationally prohibitive for large instances. Therefore, metaheuristics are commonly used in practice to compute satisfactory solutions\cite{Tian1998}. The GA approach (without crossover) can be summarized as follows.

A \textit{generation} is by definition the loop "Step 2 to Step 5" processed once. Compared to SA, there is no user-defined \textit{heat} value: the GA iteratively tries new solutions, but saves those new solutions only if their score is higher compared to \textit{the worst among the trend}. To stop the main loop, it can then be advantageous to wait for the global performances of the solutions to converge instead of arbitrarily aborting the loop. Technically, it means that while keeping track of the average score of all the current solutions, the user can wait for the program to stabilize the average score. 

GAs have the inconvenience of scaling the memory consumption very rapidly with the complexity of the problem. However, since the GA method analyzes a relatively broad part of the search space, it is well known to rapidly give a \textit{good}/\textit{acceptable} solution compared to SA.

The parameters that can be configured in the GA are:
\begin{itemize}
    \item The number of members $N$ in the initial population.
    \item The number of children $M$ for each solution.
    \item The score can be based purely on the inverse of the global distance, purely on the inverse of PTO count, or on both, potentially weighted to prioritize the optimization on the total distance or the PTO count.
    \item For randomization, only the sticks can be moved, or the sticks can only be shuffled internally, or both.
    \item The maximum distance to which the sticks can be moved. Note that the position is bounded by the size of the grid.
    \item The number of sticks that are either randomly moved or internally shuffled for each child.
\end{itemize}
Those parameterizations influence the convergence speed and the search space of the GA. They \textit{must} be tuned in order to get the best performances depending of the quantum circuit depth and width.

\begin{figure}
\caption{GA Optimization Procedure}
\begin{algorithmic}[1]
\Require Population size $N$, number of children $M$, max iterations $T$
\Ensure Best solution found

\State Generate an initial population of $N$ solutions

\For{$t = 1$ to $T$}
    \State Generate $M$ children for each solution in the population
    \Statex \quad (Each child is a randomized variant of its parent by moving or shuffling sticks)

    \State Evaluate all solutions
    \Statex \quad (Score = inverse of total distance or total PTO count)

    \State Select the best $N$ solutions
    \Statex \quad (Discard the remaining $N \times M$ worst solutions)
\EndFor

\State \Return the solution with the highest score
\end{algorithmic}
\end{figure}

\section{Graph Theory For Transfer Operations}
\label{appendix-gt-for-to}

From a stage $s_i$ to a stage $s_{i + 1}$, we consider that we found the optimal set of atoms to be transferred. Assuming that a certain mapping is applied, each atom $q_l$ starts at a position $\overrightarrow{a_l}=(a_{l,x}, a_{l,y})$ at $s_i$ and ends at a position $\overrightarrow{b_l}=(b_{l,x}, b_{l,y})$ at $s_{i + 1}$. We define $A_i = \{\overrightarrow{a}_0, \overrightarrow{a}_1, ...\}$ and $B_i = \{\overrightarrow{b}_0, \overrightarrow{b}_1, ...\}$. To avoid physical collisions, the starting positions must respect $\overrightarrow{a}_0 \neq \,...\, \neq \overrightarrow{a}_{N_q}$. The same applies to the elements of $B_i$. The transfer digraph $G_{T, i}(V_{T, i}, E_{T, i})$ for the transition $s_i \rightarrow s_{i + 1}$ is defined as $V_T \defeq A_i \cup B_i$ and $E_T \defeq \{(\overrightarrow{a}_0, \overrightarrow{b}_0), (\overrightarrow{a}_1, \overrightarrow{b}_1), ...\}$. It is important to note that from this definition an atom is transferred at most once, i.e. an atom $q_l$ is associated with at most one couple $(\overrightarrow{a}_l, \overrightarrow{b}_l)$, before post-processing of the potential cycles (see Section \ref{section-transfer-operation}). From the definition of $G_T$, we deduce the following lemma:
\begin{lemma}
\label{lemma-inner-outer-gt}
    Every vertex of $G_T$ has at most one inner edge and one outer edge.
\end{lemma}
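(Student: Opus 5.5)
The plan is to argue directly from the definition of $G_T$ and the injectivity of the position assignments in $A_i$ and $B_i$. Every edge of $E_T$ has the form $(\overrightarrow{a}_l, \overrightarrow{b}_l)$ for a unique qubit $q_l$, since each atom is transferred at most once before cycle post-processing. An edge leaving a vertex $v$ must therefore have $v = \overrightarrow{a}_l$ for some $l$, and an edge entering $v$ must have $v = \overrightarrow{b}_{l'}$ for some $l'$. The statement splits into two independent claims, one for out-degree and one for in-degree, and each will be proved by contradiction.

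\emph{Out-degree.} Suppose a vertex $v$ has two distinct outgoing edges $(\overrightarrow{a}_l, \overrightarrow{b}_l)$ and $(\overrightarrow{a}_{l'}, \overrightarrow{b}_{l'})$. Then $\overrightarrow{a}_l = \overrightarrow{a}_{l'} = v$.
\begin{itemize}
\item If $l = l'$, the two edges coincide, because each atom contributes exactly one couple. This contradicts distinctness.
\item If $l \neq l'$, two different atoms occupy the same grid site at stage $s_i$. This contradicts the collision-avoidance requirement that the elements of $A_i$ be pairwise distinct.
\end{itemize}
Hence every vertex has at most one outgoing edge.

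\emph{In-degree.} The argument is symmetric. Two distinct incoming edges at $v$ would force $\overrightarrow{b}_l = \overrightarrow{b}_{l'}$ with $l \neq l'$. This contradicts the pairwise distinctness of the elements of $B_i$ at stage $s_{i+1}$.

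The one subtle point is that $V_T = A_i \cup B_i$ is a union, so a single grid site may lie in both $A_i$ and $B_i$. This happens when an atom leaves a site that another atom arrives at, and it is exactly what produces chains and cycles. Such a vertex can carry one incoming and one outgoing edge simultaneously, but the two counts are bounded separately, so the union causes no conflict. One should also note that static atoms either contribute no edge or a loop $\overrightarrow{a}_l = \overrightarrow{b}_l$; in the loop case the argument still holds. The lemma is then immediate, and it yields that each connected component of $G_T$ is a simple path or a simple cycle, which is what Theorem \ref{theorem-break-cycles} needs.
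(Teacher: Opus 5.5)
Your proof is correct and follows the paper's argument. Two distinct incoming (resp.\ outgoing) edges at a vertex would mean two distinct atoms share an ending (resp.\ starting) position; the paper calls this a physical collision, and you make it precise through the pairwise distinctness of $B_i$ (resp.\ $A_i$), the case $l = l'$, and vertices in $A_i \cap B_i$.
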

\begin{proof}
    Consider that there exists a vertex $v$ with multiple inner edges. The vertex $v$ belongs to $B_i$, that is, $v$ is an ending position. Physically, it means that multiple atoms are transferred to the position described by $v$. Since every atom is transferred at most once, a physical collision happens in this case. Hence, there is no vertex in $G_T$ with multiple inner edges. An equivalent argument holds for multiple outer edges.
\end{proof}

We can immediately state the following Lemma.
\begin{lemma}
\label{lemma-gt-non-int-paths}
    A transition digraph $G_T$ is the union of non-intersecting paths.
\end{lemma}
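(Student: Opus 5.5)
The plan is to derive the statement from Lemma \ref{lemma-inner-outer-gt}, viewing $G_T$ as a finite digraph in which every vertex has in-degree at most one and out-degree at most one. A standard fact then applies: such a digraph splits into its (weakly) connected components, and each component is either a directed path or a directed cycle. Here "non-intersecting" means that distinct components share no vertex. A cycle is read as a closed path, in line with the paper's language in Figure \ref{fig:cycles}, where a cycle is called an "intersecting path" that must later be broken.

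The first step fixes a weakly connected component $C$ of $G_T$ and shows that it is a path or a cycle. Start at any vertex $v_0 \in C$. Because its out-degree is at most one, there is at most one way to follow outgoing edges forward, $v_0 \to v_1 \to \cdots$. Because in-degrees are also at most one, there is likewise at most one way to follow incoming edges backward. Since $V_T = A_i \cup B_i$ is finite, the forward walk either stops at a vertex of out-degree zero or returns to a vertex it has already visited. It cannot first re-enter the walk at some $v_k$ with $k \geq 1$, since $v_k$ would then have two incoming edges, which contradicts Lemma \ref{lemma-inner-outer-gt}. So the only possible return is to $v_0$, and in that case $C$ is a directed cycle. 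Otherwise, extend the walk backward from $v_0$ in the same way until reaching a vertex of in-degree zero. This produces a maximal directed path.

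The second step checks that this path or cycle is all of $C$. Suppose some edge of $C$ touched a vertex of the path but did not lie on the path. Then that vertex would have either two in-edges or two out-edges, unless it is an endpoint of the path. At an endpoint, the extra edge would contradict the maximality of the path. So every edge of $C$ lies on the path, and $C$ equals it. Since connected components are vertex-disjoint by definition, $G_T$ is the disjoint union of its components. Each component is a directed path or cycle, which gives the decomposition into non-intersecting paths.

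The main obstacle is interpretive rather than technical. One must make clear that cycles are allowed as degenerate (closed) paths, as Figure \ref{fig:cycles}-(b) suggests, and that "non-intersecting" refers to disjointness between distinct components, not to the absence of self-intersection. A minor point also needs care: vertices are positions, so one position can belong to both $A_i$ and $B_i$. This is exactly what lets paths chain together across several atoms. It causes no problem, because Lemma \ref{lemma-inner-outer-gt} already bounds the degrees on $V_T = A_i \cup B_i$ as a set.
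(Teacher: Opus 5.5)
Your proof is correct and takes essentially the same route as the paper, which also derives the statement directly from Lemma~\ref{lemma-inner-outer-gt}: two paths sharing a vertex would force that vertex to have two inner or two outer edges. Your component-wise forward/backward walk makes that one-line argument rigorous. In particular, it supplies the maximality of the paths, which the paper tacitly needs, since overlapping sub-paths could otherwise share vertices. Your reading of cycles as closed paths and of ``non-intersecting'' as vertex-disjoint matches the paper's stated convention.
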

\begin{proof}
    A path is a set of successive edges. Two paths with one or more vertices in common are said to intersect. Note that a cycle is a path, and two edges have a single vertex in common, but it does not intersect with itself. For two or more paths to intersect at a vertex $v$, the vertex $v$ must have multiple inner or outer edges. By Lemma \ref{lemma-inner-outer-gt}, this can never happen in $G_T$. Hence, only non-intersecting paths exist in $G_T$.
\end{proof}

Therefore, a simple algorithm can build the sequence of transfer operations required for the transition $s_i \rightarrow s_{i + 1}$ to be executed on actual NAQC hardware. It starts from the leaves (vertices without outer edge) of each path of $G_T$ and back-propagates to the roots (vertices without inner edge). The set of edges collected along each path is a time-ordered sequence of transfer operations which avoids physical collisions of the atoms.

Given a transfer digraph $G_T$, each path does not necessarily have a root and a leave. In that case, those paths without leaf/root are \textit{cycles}. Note that except for cycles, paths necessarily have a couple of leaf/root. The main issue is that any order of execution of the transfers in a cycle leads to a physical collision. Hence, it is necessary to \textit{break} all the cycles contained in $G_T$ before creating the sets of transfer operations.
 
The operation to break a cycle is the following. It suffices to transfer a single atom to an unused position on the grid ahead of any other transfers. For example, assume that a cycle $C_T$ is included into $G_T$. We arbitrarily select any vertex $v$ in $C_T$. By the definition of a cycle, $v$ has both a unique inner edge and a unique outer edge. Thus, $v$ is associated with the couple $(\overrightarrow{a}_l, \overrightarrow{b}_l)$ and the unique predecessor where, in graph theory, the predecessors of $v$ are the vertices $v'$ that have their outer edge pointing to $v$. Here, $v'$ is unique and is associated with the couple $(\overrightarrow{a}_{l'}, \overrightarrow{b}_{l'})$. Note that $\overrightarrow{b}_{l'} = \overrightarrow{a}_{l}$ by construction. Now, assume that the position $\overrightarrow{x}$ is not occupied by any atom at stage $s_i$ and stage $s_{i + 1}$, that is, $\overrightarrow{x} \notin A_i \cup B_i$. To break the cycle $C_T$, we apply the transformation $(\overrightarrow{a}_l, \overrightarrow{b}_l) \rightarrow (\overrightarrow{x}, \overrightarrow{b}_l)$: the vertex $v$ is \textit{moved} to position $\overrightarrow{x}$. To be physically consistent, the atom that originally started at $\overrightarrow{a}_l$ must be transferred to the position $\overrightarrow{x}$ before executing the transfer $(\overrightarrow{x}, \overrightarrow{b}_l)$. We observe that, discarding the edge $(\overrightarrow{a}_l, \overrightarrow{x})$, the subgraph $C_T$ is no longer a cycle since $v'$ became a leaf and $v$ a root.

More generally, the types of transfer $(\overrightarrow{a}_l, \overrightarrow{x})$ required to break cycles are called \textit{additional transfers} and must be executed on top of any other transfer operation to avoid any physical collision. Once all cycles have been broken, by Lemma \ref{lemma-gt-non-int-paths}, it is possible to estimate a physically feasible sequence of transfer operations.  We give the following proof for Theorem \ref{theorem-break-cycles}:
\begin{proof}
Since only one additional transfer is required to break one cycle, $N_C$ additional transfers are necessary to break $N_C$ cycles. As the additional transfers are performed prior to all the other transfers, the total amount of transfers is $\lambda_q + N_C$.
\end{proof}

\section{Atom Routing}
\label{appendix-grid-routing}

In the present work, we often refer to the Euclidean distance for the transfer of atoms. This implies that the atoms travel \textit{ straight} from the starting position to the ending position. Consider the atomic grid completely full of atoms except one slot. An atom must move from its current position to the empty slot. It is not possible with real hardware to move this atom in a straight line without a possible collision with other atoms. Therefore, we implicitly assume that enough space is provided between the slots in order for the transferred atom to travel without collision. This space is called \textit{travel space}. The constraints of the transfer digraph $G_T$ defined in Appendix \ref{appendix-gt-for-to} show that several atoms transferred in parallel cannot collide. In addition, if an atom is traveling in an empty area of the grid, the atom can move in a straight line. The exact routing, either in \textit{Manhattan} style or straight line, can be post-processed after an optimal mapping has been found in the output of the GA. An example of routing is illustrated in Figure \ref{fig:routing-example}.

\begin{figure}
    \centering
    \includegraphics[width=0.3\linewidth]{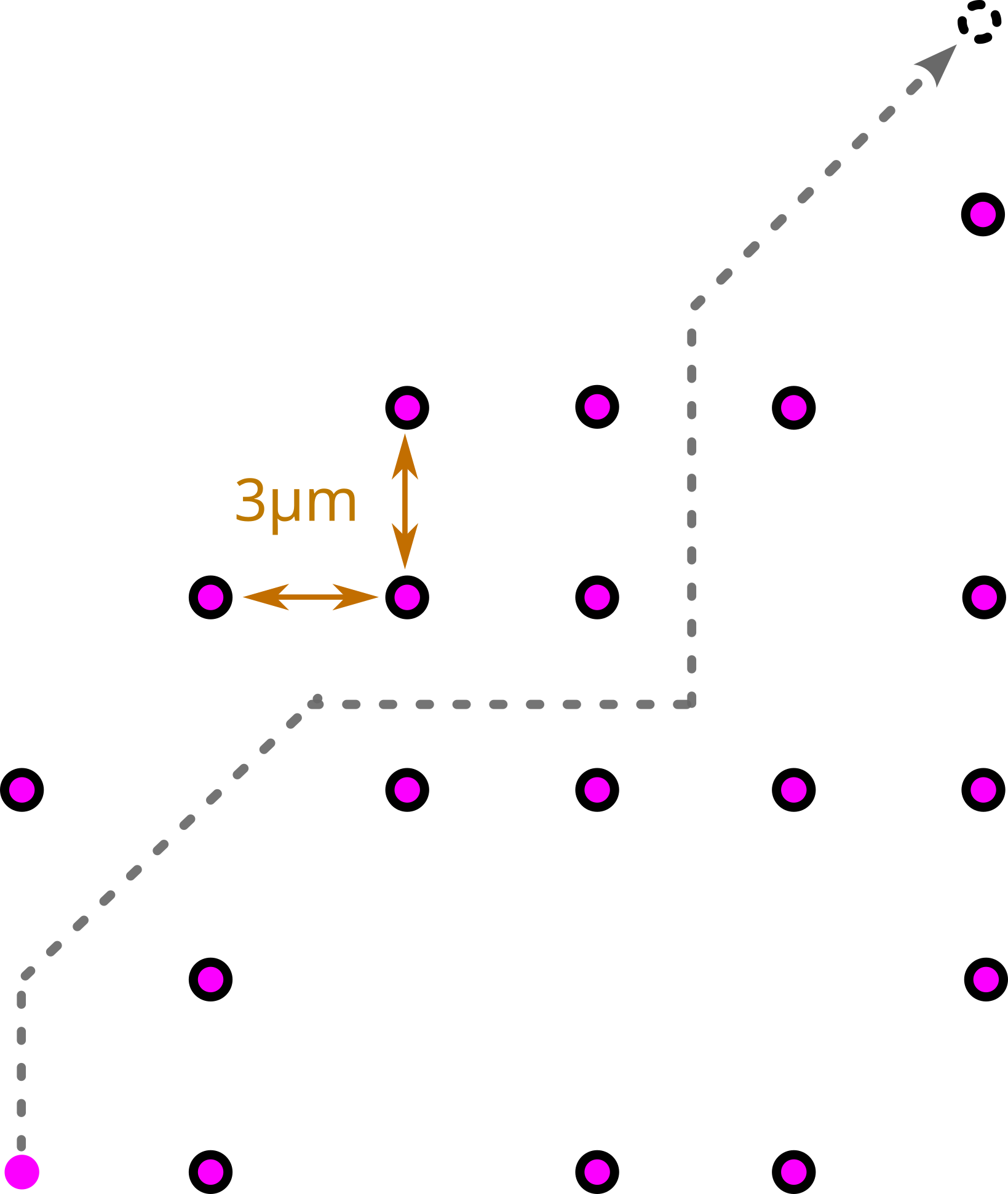}
    \caption{Example of grid routing that can be executed on real hardware. The atoms are purple dots. The black circles are static atoms. The dotted black circle is the ending position of the transferred atom. The dotted gray line is the routing. The 3µm distance separating the atoms in the storage zone is sufficient, in this example, for the transferred atom to travel in between. Straight movements, instead of the \textit{Manhattan style}, can be post-processed once the final mapping has been generated.}
    \label{fig:routing-example}
\end{figure}

Despite these considerations, the distance traveled by the atoms in the GA is based on the Euclidean distance only, since this measurement approach leads to circuit mappings with shorter total distances. The ZAC's benchmark is also based on the Euclidean distance. Note that in real hardware, the \textit{travel} space between slots depends on the precision of the optical tweezers.

\section{PTOs Sequence}
\label{pto_sequence}

\begin{figure}
\caption{Construction of Parallel Transfer Operations (PTOs)}
\label{fig:algoptos}
\begin{algorithmic}[1]
\Require Ordered set of transfer edges
$\mathcal{E} = \{e_1,e_2,\ldots,e_n\}$
\Ensure Ordered set of physically valid PTOs
$\mathcal{P} = \{PTO_1,PTO_2,\ldots\}$

\State Initialize $\mathcal{P} \gets \emptyset$

\ForAll{$e \in \mathcal{E}$}
    \State assigned $\gets$ false

    \ForAll{$PTO_i \in \mathcal{P}$}
        \If{$e$ is compatible with all transfers in $PTO_i$}
            \State Add $e$ to $PTO_i$
            \State assigned $\gets$ true
            \State \textbf{break}
        \EndIf
    \EndFor

    \If{assigned is false}
        \State Create new PTO $PTO_{\mathrm{new}}$
        \State Add $e$ to $PTO_{\mathrm{new}}$
        \State Append $PTO_{\mathrm{new}}$ to $\mathcal{P}$
    \EndIf
\EndFor

\State \Return $\mathcal{P}$
\end{algorithmic}
\end{figure}

In order to fully utilize the hardware capabilities of NAQC platforms and reduce the execution time of quantum circuits, it is preferable to group compatible transfers into PTOs. At the end of the day, the operations required for quantum engineers to execute a remapping between two successive stages are encompassed into a single time-ordered sequence of PTOs. A PTO is technically a set/list of transfers which might be executed in parallel. Note that a PTO can contain only one transfer. Naming a single transfer operation a PTO is overqualified, but is still a valid "parallel" transfer operation.

As stated in Section \ref{section-transfer-operation}, PTOs are compatible under certain conditions given by the constraints of the AODs that affect the optical tweezers. Here, we show a specific execution of an algorithm that determines the sequences of PTOs for a given transition. We do not claim this algorithm to be optimal in the amount of PTOs but it has the property to not increase the amount of transfers evaluated prior to the PTO estimation, just after cycles removal. In fact, by creating more additional transfers than required, it might be possible to further reduce the length of the PTO sequence. It is then a trade-off to estimate whenever the transfer fidelity or a single PTO is more costly from a hardware point of view.

It is assumed that all cycles have been removed from $G_T$. Then, each path in $G_T$ exhibits a time-ordered sequence of transfer. In terms of programming, this is a \textit{list of transfer lists} , where each path has its own executable list of transfers. The exact algorithm is described in Figure \ref{fig:algoptos}.

As an example, we refer to the example given in Figure \ref{fig:pto-algo-ex}. The actions performed by the algorithm to determine the PTOs starts as follows. Firstly, the edges $e_{0, 0}$ and $e_{1, 0}$ are tested. Since they are compatible, they are pushed into the same PTO, named PTO-0. Then, the edge $e_{0, 1}$ is tested with the two transfers in the current PTO-0. This edge $e_{0, 1}$ is incompatible; therefore, a new PTO, PTO-1 is created that includes the transfer associated with $e_{0, 1}$. The algorithm continues with edge $e_{1, 1}$ and the sequence PTO-0, PTO-1, PTO-2, PTO-3, PTO-4 is obtained. This sequence is a physically valid sequence of parallel transfer operations. Note that $e_{0, 3}$ or $e_{1, 3}$ can be processed first to obtain a similar result but a different time-ordered sequence. In the case where $e_{1, 3}$ is first processed, we obtain PTO-0, PTO-1, PTO-3, PTO-2, PTO-4.

\begin{figure}
    \centering
    \includegraphics[width=0.45\linewidth]{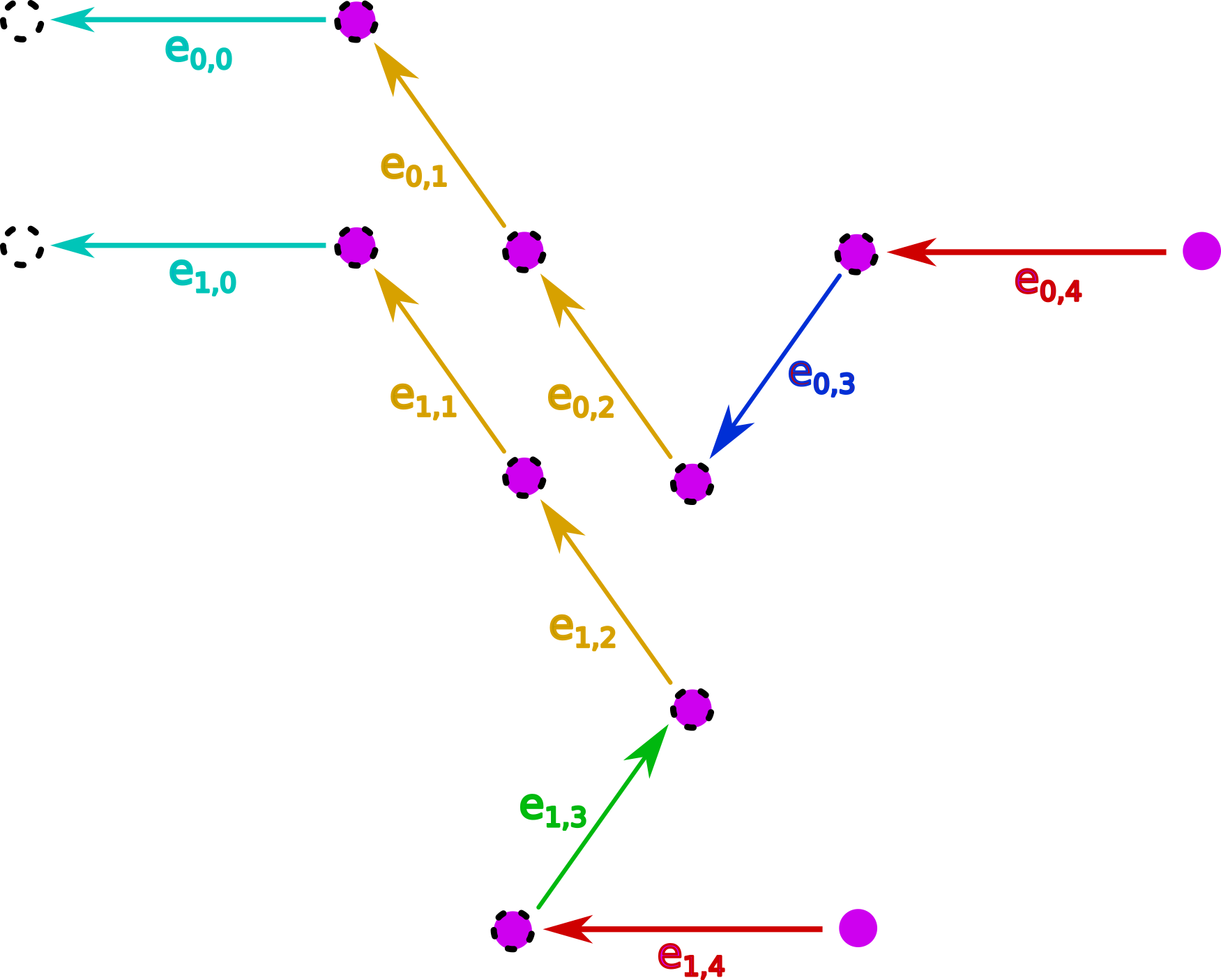}
    \caption{Example of parallel transfer operations (PTOs). The atoms are purple dots. The ending positions are dashed black circles. Each color is associated to a PTO. The transfers contained in a PTO can be executed in parallel according to the AOD constraints. The time-ordered sequence is: \textit{cyan edges}, \textit{gold edges}, equivalently \textit{green edge}/\textit{blue edge} or \textit{blue edge}/\textit{green edge}, then \textit{red edges}.}
    \label{fig:pto-algo-ex}
\end{figure}

The amount of PTOs is dependent on the starting and ending positions of the atoms in the transition $s_i \rightarrow s_{i+1}$. Thus, reducing the total amount of PTOs to execute quantum circuits requires one to carefully map the positions of the atoms at each stage. In addition, we observe that $\lambda_q$ is a theoretical lower bound. Thus, reaching this lower bound is not guaranteed but depends on the cycles created during the mapping of the stages. Even if cycles are rare cases, at least in our benchmark, it might be advantageous to optimize the number of cycles for each stage.

\section{Benchmark Configuration}
\label{bench-config}

The atomic grid is configured for a 2Z-arch, with entanglement and storage zones without a dedicated readout zone. The atomic grid is configured with the same hardware parameters as those evaluated for ZAC. Note that both single-qubit gates and CZ gates can be executed in the entanglement zone whereas only single qubit gates can be executed in the storage zone. In this benchmark, we consider the storage zone as the readout zone. Hence, two stages are added to all quantum circuits. The first additional stage is placed at the beginning of the scheduled quantum circuit: All atoms start in the storage zone. The second additional stage is placed at the end of the scheduled quantum circuit. All remaining atoms in the entanglement zone are moved back to the storage zone for readout. This approach is also used in ZAC \cite{lin2025reuseawarecompilationzonedquantum}. The grid is explicitly described in Figure \ref{fig:grid-config}. Note that compared to ZAC, we consider fewer rows for the storage and entanglement zones: since we consider algorithms with less than 100 qubits, the number of slots available for both storage and entanglement zones is sufficient.

\begin{figure}
    \centering
    \includegraphics[width=0.5\linewidth]{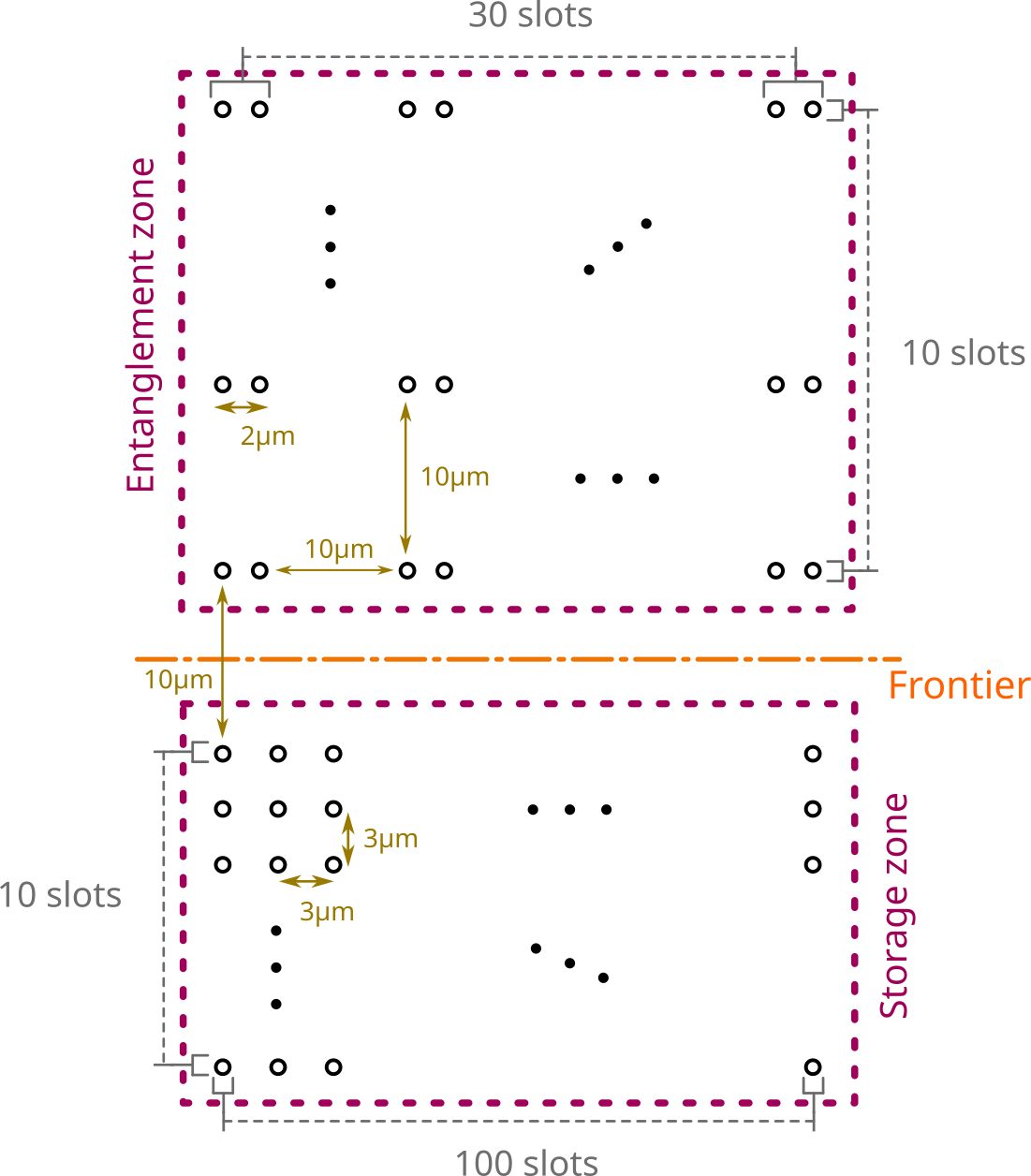}
    \caption{Atomic grid configuration using for the benchmark. Spacing between slots in the storage zone is 3µm: 100x10 slots are available. Spacing between couple of slots in the entanglement zone is 8µm: 30x10 couple of slots are available. The space separating the storage and entanglement zone is 10µm.}
    \label{fig:grid-config}
\end{figure}

The studied quantum circuits are downloaded from QASMBench \cite{li2022qasmbenchlowlevelqasmbenchmark} and are all transpiled with Qiskit using optimization level 3. The output files consist of single-qubit gates U or P and CZ gates. In Figure \ref{fig:schedule-ex}, we show how a quantum circuit is scheduled for a 2Z-arch after transpilation. Note that this hardware configuration involving the global Rydberg laser to address the full entanglement area forces all the CZ gates of a given stage to run in parallel. The stage $\{ \text{CZ}(q_0, q_1), \text{CZ}(q_1, q_0) \text{CZ}(q_2, q_3) \}$ is not valid  since $q_2$ and $q_3$ would undergo another Rydberg excitation while $\text{CZ}(q_1, q_0)$ is processed. However, $\{ \text{CZ}(q_0, q_1), \text{CZ}(q_1, q_0), \text{CZ}(q_2, q_3), \text{CZ}(q_3, q_2) \}$ is valid. Note that the scheduler used for our benchmark is only valid for a 2Z-arch. In our case, we use the \textit{as soon as possible} (ASAP) approach, where the gates are simply processed in the order of the provided sequence. It is important to note that this scheduling process gives similar stages (in terms of entanglement gates) as processed by ZAC in its best usage case (SA + dynPlace + reuse).

\begin{figure}
    \centering
    \includegraphics[width=0.5\linewidth]{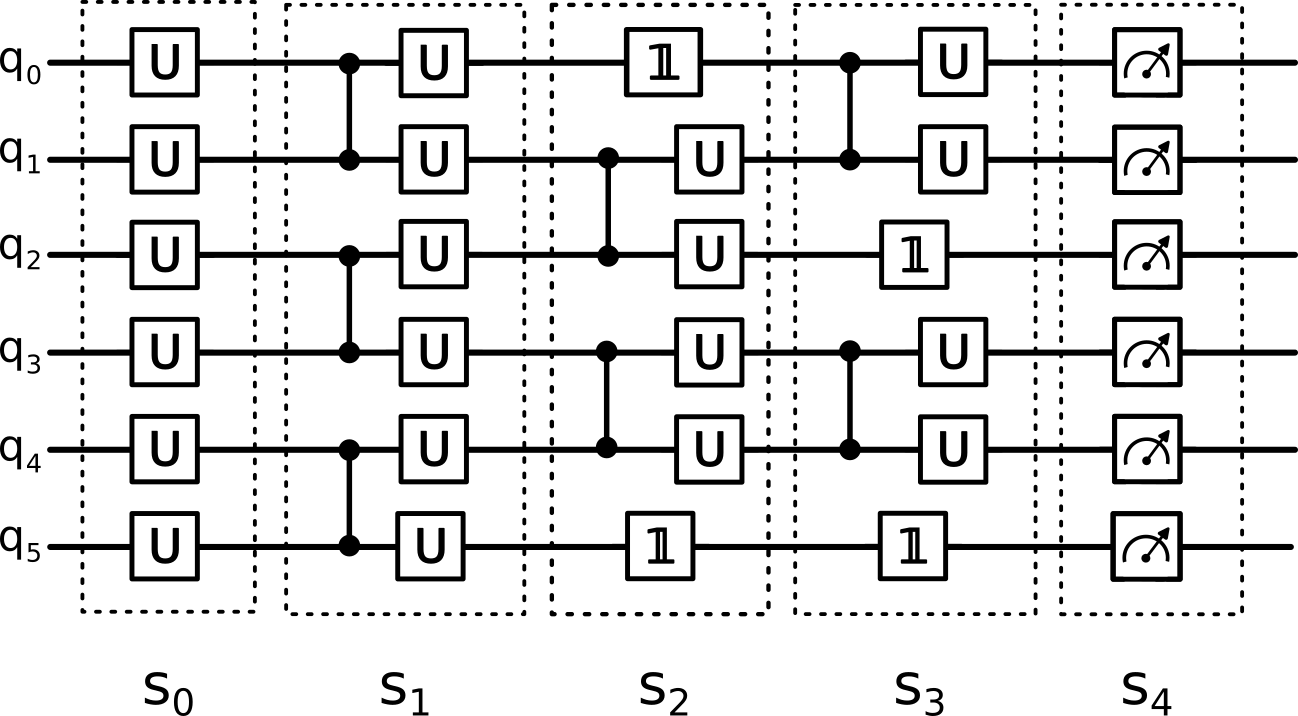}
    \caption{Quantum circuit from Figure \ref{fig:qcirctom} where the gates are expressed in the gate library available for the current benchmarked platform. Note that the stages are similar with Figure \ref{fig:qcirctom}.}
    \label{fig:schedule-ex}
\end{figure}

The GA is split in three distinct passes:
\begin{enumerate}
    \item Move one single stick with a maximum distance of 5 slots in the atomic grid. The inverse of the global distance is measured as the score. 200 generations are processed.
    \item Move one single stick with a maximum distance of one slot in the grid. Shuffle the substicks of a single random stick in the entanglement zone (note that sticks in the storage zone own only one substick, hence shuffling the combinations is irrelevant). The inverse of the global distance is measured as the score. 100 generations are processed.
    \item Shuffle the substicks of a random stick in the entanglement zone. The inverse of the PTO count is measured as the score. One hundred generations are processed.
\end{enumerate}

Note that the best solutions of the first pass are used as initial population for the second pass. The first and second passes find potentially good candidates on the basis of the distance, whereas the third pass refines the solutions on the basis of the PTOs. For the GA, we set the following empirical parameter choices, which seem to give the best convergence speed for quantum circuits with up to 100 qubits: $N = 20$, $M = 2$. The total transfer count, total PTO count, and total distance traveled are averaged over four runs of the GA for each tested quantum circuit. Note that the GA rarely produces cycles for transitions, which implies that the total transfer count rarely changes over the multiple runs of the GA.

\end{document}